%
\documentclass[copyright,creativecommons]{eptcs}

\usepackage{etex}   
\usepackage{amsthm}

\newtheorem{theorem}{Theorem}[section]
\newtheorem{lemma}[theorem]{Lemma}
\newtheorem{proposition}[theorem]{Proposition}
\newtheorem{corollary}[theorem]{Corollary}

\theoremstyle{definition}
\newtheorem{definition}[theorem]{Definition}
\newtheorem{example}[theorem]{Example}
\newtheorem{construction}[theorem]{Construction}
\newtheorem{fact}[theorem]{Fact}

\theoremstyle{remark}



\usepackage{hyperref}

\usepackage[all]{xy}
\usepackage{tikz-uml}

\usepackage{tikzmacros} 
\usepackage{longtable}
\usepackage{multirow}

\usepackage{amssymb,amsmath}
\usepackage[T1]{fontenc}
\usepackage{bm}
\newcommand{\spv}{\vspace{.1cm}}

\newcommand{\nacs}{{\mathcal N}}
\newcommand{\Rrel}[1]   {\stackrel{{#1}}{\Longrightarrow}}
\newcommand{\rrel}[1]   {\stackrel{{#1}}{\rightarrow}}
\newcommand{\lrel}[1]   {\stackrel{{#1}}{\leftarrow}}

\usepackage{graphicx}
%


\newcounter{comCounter}[page]
\newcommand{\ifnonempty}[2]{\ifthenelse{\equal{#1}{}}{}{#2}}


\newcommand{\AC}[1]{{#1}}
\newcommand{\from}{\leftarrow}
\newdir{ >}{*{}!/-.2em/\dir{>}}
\newcommand{\ov}{\overline}
\newcommand{\cycl}[1]{\ensuremath{\mbox{\textcircled{\scriptsize{$#1$}}}}}

\newcommand{\inv}[1]{{\mathit{inv}(#1)}}

%
\title{Encoding Incremental NACs in Safe Graph Grammars\\ using Complementation}
%
%
\author{Andrea Corradini
\institute{Dipartimento di Informatica,  University of Pisa, Italy}
\email{andrea@di.unipi.it}
\and
Maryam Ghaffari Saadat \qquad\qquad Reiko Heckel
\institute{Department of Informatics, University of Leicester, UK}
\email{\{mgs17,rh122\}@leicester.ac.uk}}
%

%
%
\begin{document}
\maketitle             
\begin{abstract}
In modelling complex systems with graph grammars (GGs), it is convenient to restrict the application of rules using attribute constraints and negative application conditions (NACs). However, having both attributes and NACs in GGs renders the behavioural analysis (e.g. unfolding) of such systems more complicated. We address this issue by an approach to encode NACs using a complementation technique. We consider the correctness of our encoding under the assumption that the grammar is safe and NACs are incremental, 
and outline how this result can be extended to unsafe, attributed grammars.  
 \end{abstract}



\section{Introduction}
\label{sec:Introduction}

\medskip
\noindent
\emph{Motivation.}
Graph grammars (GGs) can model complex systems or networks where data transformations are tightly coupled with structural changes. A defining feature of complex systems is that they show emergent behaviour arising from the actions of decentralised, distributed, often autonomous agents individually described by means of simple rules~\cite{pinaud2012porgy,koch2002,book}. One purpose of modelling such systems is to investigate this emergent behaviour. This requires semantic models and analysis techniques able to account for concurrency properties, such as the unfolding, which captures in a single structure the full behaviour of the system, and its partial approximations~\cite{BaldanCMR07,DBLP:journals/iandc/BaldanCK08}.

It is often natural to impose conditions on the rules governing the actions of individual agents to restrict their application using both constraints on data attributes and negative application conditions (NACs). Although allowing for both attributes and NACs enhances the expressiveness of rules and facilitates a concise style of modelling, it renders semantics and behavioural analysis more complicated. 

In previous work, authors of this paper have generalised the theory of unfolding from plain  graph grammars separately to the case with NACs and to the attributed case~\cite{corradini2019unfolding,attunfolding}. While in the plain case, conflicts and dependencies between transformation steps are based on how nodes and edges are used, created or deleted by these steps, e.g.~if created by one step and used by another, in the case with NACs we have to account for additional sources of conflicts and dependencies, e.g.~when a rule deletes part of a structure that inhibits the application of another due to a NAC. Also, in the attributed case, dependencies can be based on attributes updated by one step and then read by another. Hence both ge\-ne\-ralisa\-tions give rise to additional complexity both conceptually and in terms of the construction of unfolding.

Our ultimate goal is to provide a comprehensive semantics and analysis approach based on a theory of unfolding that supports both attribution and NACs. 
As a first contribution, we investigate here the possibility of encoding incremental NACs of a safe grammar (where, up to renaming, all reachable graphs are subgraphs of the type graph) by exploiting structural properties of the 
graphs.  
We also sketch how this technique could be extended to unsafe grammars by exploiting attributes.

\medskip
\noindent
\emph{Related work.}  In~\cite{da2017theorem} the authors translate graph grammars with NACs and attributes into event-B models, a state-based formalism based on first-order logic with set theory. The authors prove in two stages that the encoded model is semantically equivalent to the input grammar: 1) encoding NACs only, 2) extending the encoding to attributes. Their goal is to be able to use theorem proving instead of model checking to show behavioural correctness without exploring the state space. \AC{Even if our approach is different, our motivations overlap with theirs in attempting to reduce complexity of behavioural analysis.}


Another example for transforming negative application conditions over attributed graphs into a se\-mantics-preserving way is~\cite{DBLP:journals/jlp/NassarKAT20}. Here, the motivation is to simplify rather than encode the conditions, \AC{but the technique of transforming rules in order to ensure that certain constraints are preserved is quite similar to the construction presented later in Section~\ref{sec:SafeCase}.}



\medskip
\noindent
\emph{Methodology.}
Our problem can be expressed generically as follows: Given a \emph{conditional} graph grammar 
$\mathcal{CG}$, that is a graph grammar where rules may be equipped with NACs  constraining their applicability,   
%
construct an \emph{unconditional} grammar $DE(\mathcal{CG})$ 
that is equivalent to $\mathcal{CG}$ in both the derivations generated and their dependency relations. That means, if two consecutive steps in a derivation of $DE(\mathcal{CG})$ are sequentially dependent (cannot be swapped), the same should hold for the corresponding steps in the derivation over $\mathcal{CG}$, and vice versa.  

To express relations between grammars we rely on a notion of grammar morphism that preserves the behaviour of the source grammar, inducing a mapping of derivations and, moreover, preserves the relation of sequential independence between steps within a derivation. Such a morphism is based on a relation between the type graphs of the  grammars that allows, among other things, to drop or rename types, and a mapping of rules that permits to reduce rules by dropping context and NACs.

The transformation of a given conditional grammar $\mathcal {CG}$ to its encoding $DE(\mathcal {CG})$ without NACs consists of two steps. First, we define a grammar $E(\mathcal {CG})$ whose NACs are essentially the same as in $\mathcal {CG}$ \AC{but are also encoded in a suitable way in the graphs to be transformed. Grammar $E(\mathcal {CG})$ is related to the original grammar by a morphism $e: E(\mathcal {CG}) \to \mathcal{CG}$.
It can be shown that the NACs of  $E(\mathcal {CG})$ are redundant thanks to their encoding, thus we can remove them from $E(\mathcal {CG})$ obtaining a grammar $DE(\mathcal {CG})$, and we can show that there is a  morphism $d: E(\mathcal {CG}) \to DE(\mathcal{CG})$ which is the identity on types, constraints, rules and input graph.}
%
%
%
Our goal then is to show that both grammar morphisms of the span
$\mathcal {CG} \lrel{e} E(\mathcal{CG}) \rrel{d} DE(\mathcal{CG})$
not only \emph{preserve} derivations and sequential independence, as every morphism, but also \emph{reflect} them. In particular, this means that 
\begin{enumerate}
    \item 
the NACs dropped by $d$ do not restrict the applicability of rules in $E(\mathcal {CG})$ nor create dependencies and conflicts that do not exist on the underlying transformations;
    \item 
the encoding of $\mathcal {CG}$ in $E(\mathcal{CG})$ via $e$ does not restrict the applicability of rules nor create new dependencies and conflicts.
\end{enumerate}
Since preservation of independence implies  reflection of causality and conflicts, it ensures that $DE(\mathcal {CG})$ and $\mathcal {CG}$ have not only the same sequential behaviour, but also that their computations have the same branching structure, and thus equivalent unfoldings.

Our approach for showing that morphisms $d$ and $e$ reflect derivations relies on showing that (1) the NACs are redundant in $E(\mathcal{CG})$ so dropping them in $DE(\mathcal{CG})$ does not change the behaviour and that (2) \AC{the structure resulting from the encoding of the NACs in} $E(\mathcal{CG})$ does not restrict the behaviour more than it was already restricted by the NACs in $\mathcal{CG}$.  Concerning independence, it is trivially reflected by $d$ but for morphism $e$ the property is at present only conjectured.
%
Both arguments will rely on \AC{suitable properties satisfied by the reachable graphs} in  grammar $E(\mathcal{CG})$. To express such properties formally 
we introduce a set of graph constraints and require that such constraints are invariants, that is, they are satisfied in the start graph and preserved by the application of rules, and hence hold for all reachable graphs. This will play an important role in verifying the correctness of the encoding.

\medskip
\noindent
\emph{Realisation.}
In particular, we propose an approach to encode \emph{incremental} NACs (i.e. NACs that can only be created or destroyed in a certain order~\cite{incrementalnacs}) into additional rule context.
The restriction to \emph{incremental} NACs is motivated by the fact that only with this restriction the notion of sequential independence (and therefore of causality) among steps enjoys some expected properties, which guarantee that a concurrent semantics can be defined properly~\cite{DBLP:conf/gg/CorradiniH14}. Our construction can be considered as a generalisation of the well-known technique of complementation for Elementary Net Systems \cite{DBLP:conf/ac/Rozenberg86}, which transforms a system with possible situations of contact (where we try to place tokens on places that are already marked) into an equivalent one which is contact-free. 

We consider the encoding in detail for safe conditional graph grammars. This is a prerequisite to a semantic comparison between notions of unfolding with~\cite{corradini2019unfolding} and without NACs~\cite{DBLP:conf/tagt/BaldanCM98}, all resulting in safe (i.e., occurrence) graph grammars. After the presentation of the relevant background in Section~\ref{sec:background}, the complementation construction is  described in Section~\ref{sec:encoding-safe}. The main results, showing the equivalence of the original grammar with the one obtained by encoding the NACs and then forgetting them are presented in  Section~\ref{sec:equivalenceOfGrammars}. A simple safe grammar modeling a client-server architecture for online meetings is used to illustrate concepts and constructions. 
Then in Section~\ref{sec:AttributedEncoding} we discuss informally the general case of (unsafe) conditional attributed graph grammars and their encoding. Here we introduce the idea of using attributes as counters for NAC occurrences reminiscent of reference counters in garbage collection algorithms. We also rely on rules with multiobjects~\cite{book} to manage the creation and deletion of complemented edges, but consider such rules as rule schemata that can be instantiated to countable sets of basic rules. This is described with an example of a Token-Curated Registry (TCR), an architectural pattern for smart contracts. 
In Section~\ref{sec:Conclusion} we summarise our contributions and outline future work.



\section{Background: Conditional Graph Grammars and Their Morphisms}
\label{sec:background}
This section first summarizes the basic definitions of typed graph grammars~\cite{DBLP:journals/fuin/CorradiniMR96} based on the DPO approach~\cite{DBLP:conf/focs/EhrigPS73,DBLP:series/eatcs/EhrigEPT06} extended by negative application conditions (NACs)~\cite{DBLP:journals/fuin/HabelHT96}. In the second part we introduce a quite general definition of conditional grammar morphisms and we prove that they preserve derivations and independence.

\medskip
\noindent
\emph{Graphs and typed graphs.}
Formally, a  \emph{(directed, unlabelled) graph} is a tuple $G =
\langle N, E, s, t \rangle$, where $N$ is a set of  \emph{nodes},
$E$ is a set of  \emph{arcs}, and $s, t: E \rightarrow N$ are the
 \emph{source} and  \emph{target} functions.
A  \emph{(graph) morphism} $f: G \rightarrow G'$ is a pair of functions $f
= \langle f_N: N \rightarrow N', f_E: E \rightarrow E'\rangle$
preserving sources and targets, i.e., such that $f_N \circ s = s' \circ
f_E$ and $f_N \circ t = t' \circ f_E$. Morphism $f$ is a \emph{mono(morphism)} if both components are injective, and is an \emph{iso(morphism)} if both are bijective.

The category of graphs and graph morphisms is denoted by $\mathbf{Graphs}$. Given a graph $TG$, called \emph{type graph}, a  \emph{$TG$-typed (instance) graph} is a pair
$\langle G, t_G \rangle$, where $G$ is a graph and $t_G : G \to TG$ is
a (typing) morphism. A morphism between typed graphs $f : \langle G_1, t_{G_1}
\rangle \to \langle G_2, t_{G_2} \rangle$ is a graph morphisms $f :
G_1 \to G_2$ consistent with the typing, i.e., such that $t_{G_1} =
t_{G_2} \circ f$. The category of $TG$-typed graphs and typed graph
morphisms is denoted by  $\mathbf{Graphs}_{TG}$.

\medskip
\noindent
\emph{Double-pushout rewriting.} A \emph{(graph transformation) rule}  $p = (L \lrel l K \rrel r R)$ is a span of monos in $\mathbf{Graphs}_{TG}$.
A \emph{match} of rule $p$ in a graph $G$ is a mono $m: L \to G$.
Given a match $m$ of $p$ in $G$,  
a \emph{double-pushout (DPO) transformation} $G \Rrel{p,m}  H$ from $G$ to  $H$ exists if we can construct a diagram such as~\eqref{dia:derivation-diagram-def} where both squares are pushouts in  $\mathbf{Graphs}_{TG}$.

\medskip
\noindent
\begin{minipage}[c]{.65\linewidth}
\emph{Derived and ancestor rules.} If two rules $p = (L \lrel l K \rrel r R)$ and $p' = (G \lrel g D \rrel h H)$ are related by a DPO diagram as in \eqref{dia:derivation-diagram-def}, then we say that $p'$ \emph{is derived from} $p$ and $p$ \emph{is an ancestor of} $p'$.
\end{minipage}
\begin{minipage}[c]{.34\linewidth}
    \vspace{-2ex}
    \begin{align}
      \label{dia:derivation-diagram-def}
      \begin{array}[c]{l}
        \xymatrix@R=4ex{
          L 
          \ar[d]_{m}
          & 
          K
          \ar[l]_{l}
          \ar[r]^{r}
          \ar[d]^{d}
          &
          R
          \ar[d]^{m^*}
          \\
          G 
          &
          D
          \ar[l]^{g}
          \ar[r]_{h}
          &
          H
        }
      \end{array}
    \end{align}
  \end{minipage}

\medskip
\noindent
\emph{Negative constraints and incrementality.} The applicability of rules can be restricted by (negative) constraints.
%
%
 A \emph{constraint} over a rule  $p$  is a mono $n: L \to
N$. A match $m: L \to G$ \emph{satisfies} $n$ (written $m \models n$) if there is no mono $q: N \to G$ such that $q \circ
n = m$ (see Diagram~\eqref{dia:nac-ipo} (left)). If $\nacs$ is a set of constraints we write $m \models \nacs$ if $m \models n$ for each $n \in \nacs$.
If $n$ and $n'$ are  constraints over $p$ we say that $n$ \emph{subsumes} $n'$ (written $n \models n'$) if for every match $m: L \to G$, $m \models n$ implies $m \models n'$.
   \begin{align}
      \label{dia:nac-ipo}
      \begin{array}[c]{l@{\hspace{2cm}}r}
\xymatrix{N  \ar@{.>}[dr]_{q}|{/} &  L \ar[l]_{n} \ar[d]^{m}\\   & G} 
&
        \xymatrix@R=2ex{
          N  & & L \ar[ll]_{n}  \\ 
          & A \ar[ul] & & B \ar[ll] \ar[ul]\\
          N^{-} \ar[uu] \ar@{-->}[ur]& & L^{-} \ar[ll]_{n^{-}}  \ar[uu]|(.5)\hole \ar@{-->}[ur] 
          }
       \end{array}
    \end{align}
We will often refer, for a constraint $n: L \to N$,  to its ``negative items'', that is, to the nodes and edges of $N$ that are not in the image of $L$. The smallest subgraph of $N$ containing those items will be denoted $N^{-}$, and is characterized as in Diagram~\eqref{dia:nac-ipo} (right) by an {initial pushout}: The square made of morphisms $n$ and $n^{-}: L^{-} \to N^{-}$ and of the vertical morphisms 
is an \emph{initial pushout} over $n$ if for each pushout square $N \from A \from  B \to  L \rrel n N$ with $B \to L$ mono, there are unique morphisms $N^{-} \to A$ and $L^{-}  \to B$ making the diagram commute.
For a constraint $n$ we will call $n^{-}$ its \emph{shape},  $L^{-}$ its \emph{border} and $N^{-}$ its \emph{body}.
Note that a shape would be an iso iff $n$ is an iso, but in this case the rule is never applicable and can be dropped. Thus we will safely assume that the shape of a constraint is not an iso. For a set of constraints $\mathcal{N}$  we denote by $sh(\mathcal{N})$ its set of shapes. 

A  constraint $n: L \to N$ is \emph{incremental} if whenever it can be decomposed in two different ways, i.e. $L \rrel n N$ = $L \rrel {a_{1}} X_{1} \rrel {b_{1}} N$ = $L \rrel {a_{2}} X_{2} \rrel {b_{2}} N$  
where all morphisms are mono, then there exists either a morphism from $X_{1}$ to $X_{2}$ or one in the opposite direction making the two triangles commute. For untyped graphs it is possible to show that if $n$ is incremental then its shape $n^{-}$ can only be one of those shown in~\eqref{eq:shapes}, where nodes are boxes, $L^{-}$ is made of the black items, while the red ``negative'' items belong to $N^{-}\setminus L^{-}$. For $TG$-typed graphs, for every shape in~\eqref{eq:shapes} each distinct typing morphism $t_{N^{-}}: N^{-} \to TG$ determines a different incremental shape.

\begin{equation}
\label{eq:shapes}
\scalebox{.8}{
\begin{mytikz}
\graphsingle (g1) {
\graphname{shape $\mathit{IN}$:}
    \begin{mytikz}  
	\single (s1) at (0,0) {}; 
    \begin{naconly}
	\single (m1) at (1,0) {};   
    \edge (m1) -- node[text=red]{}(s1);
    \end{naconly}
 	\end{mytikz}
};
\end{mytikz}
}
\quad
\scalebox{.8}{
\begin{mytikz}
\graphsingle (g1) {
\graphname{shape $\mathit{OUT}$:}
    \begin{mytikz}  
	\single (s1) at (0,0) {}; 
    \begin{naconly}
	\single (m1) at (1,0) {};   
    \edge (s1) -- node[text=red]{}(m1);
    \end{naconly}
 	\end{mytikz}
};
\end{mytikz}
}
\quad
\scalebox{.8}{
\begin{mytikz}
\graphsingle (g1) {
\graphname{shape $\mathit{E}$:} 
    \begin{mytikz}  
	\single (s1) at (0,0) {}; 
	\single (s2) at (1,0) {}; 
    \begin{naconly}
    \edge (s1) -- node[text=red]{}(s2);
    \end{naconly}
 	\end{mytikz}
};
\end{mytikz}
}
\quad
\scalebox{.8}{
\begin{mytikz}
\graphsingle (g1) {
\graphname{shape $L$:}
    \begin{mytikz}  
	\single (s1) at (0,0) {}; 
    \begin{naconly}
    \path (s1) edge[loop right,red] node[text=red]{}(s1);
    \end{naconly}
 	\end{mytikz}
};
\end{mytikz}
}
\quad
\scalebox{.8}{
\begin{mytikz}
\graphsingle (g1) {
\graphname{shape $N$:}
    \begin{mytikz}  
   \begin{naconly}
	\single (m1) at (0,0) {};   
    \end{naconly}
 	\end{mytikz}
};
\end{mytikz}
}
\quad
\scalebox{.8}{
\begin{mytikz}
\graphsingle (g1) {
\graphname{shape $\mathit{NL}$:}
    \begin{mytikz}  
   \begin{naconly}
	\single (m1) at (0,0) {};  
	   \path (m1) edge[loop right,red] node[text=red]{}(m1); 
    \end{naconly}
 	\end{mytikz}
};
\end{mytikz}
}
\end{equation}

\noindent

\medskip
\noindent
\emph{Grammars and derivations.} A \emph{(typed) graph grammar (GG)}  $\mathcal G = \langle TG, G_{in}, P, \pi\rangle $ consists of a type graph $TG$, a $TG$-typed input graph $G_{in}$, a set of rule names $P$ and a function $\pi$ assigning to each $p \in P$ a rule $\pi(p) = (L_p \lrel {} K_p \rrel {} R_p)$. 

A \emph{conditional graph grammar (CGG)}
$\mathcal{CG} = \langle TG, G_{in}, P, \pi, \nacs\rangle $ adds to its \emph{underlying graph grammar} $\langle TG, G_{in}, P, \pi \rangle$  a function $\nacs$ 
providing for each $p \in P$ a \emph{negative application condition (NAC)} $\nacs(p)$  over  $\pi(p)$,  that is a set of constraints over $\pi(p)$.  Given a rule\footnote{For the sake of simplicity, we often identify a rule with its name, leaving the application of  $\pi$ implicit.}  $p \in P$ and a match $m: L \to G$, there is a \emph{conditional transformation} $G \Rrel{p,m}  H$ if the match $m: L \to G$ satisfies all the constraints in $\nacs(p)$ and a  double-pushout diagram such as~\eqref{dia:derivation-diagram-def} can be constructed. Sometimes we will call the pair $\langle p, \nacs(p)\rangle$ a \emph{conditional rule}. 


A \emph{derivation} in $\mathcal G$ is a finite sequence of transformations $s =
(G_{in} = G_0 \Rrel{p_1, m_1} \cdots \Rrel{p_n, m_n} G_n)$ with 
$p_i \in P$.   A \emph{conditional derivation} in
$\mathcal{CG}$ is defined similarly as a sequence of \emph{conditional} transformations. A graph $H \in \mathbf{Graphs}_{TG}$ is \emph{reachable} in a (C)GG if there exists a (conditional) derivation ending with $H$. 

A (conditional) graph grammar is \emph{safe} if all rules (including their constraints, if any) and all reachable graphs have an injective typing morphism to $TG$. A simple safe conditional grammar is presented in Example~\ref{ex:Client-Server}. In a safe grammar, since w.l.o.g.~we can consider all typing morphisms as inclusions, we can assume that $G_{in}$ and all reachable graphs are subgraphs of $TG$. Even if safe grammars enjoy quite a limited expressive power, certain variants of them (\emph{occurrence} grammars) are exploited as semantic domain able to represent, through an \emph{unfolding construction},
 causality and independence among transitions of (conditional) grammars, as well as the branching structure of their computations~\cite{BaldanCMR07,corradini2019unfolding}. 

\begin{center}
\scalebox{.75}{ 
\begin{tabular}{c}
\begin{mytikz}
\graphsingle (g1) {
\graphname{type graph $TG$:}
    \begin{mytikz}  
\usesingle (c1) at (0,0) {$C1$};
\usesingle (c2) at (0,-2) {$C2$};
\usesingle (c3) at (0,-1) {$C3$};
\usesingle (m2) at (2,0) {$M2$};
\usesingle (m1) at (2,-2) {$M1$};
 \usesingle (s2) at (4,0) {$S2$};
 \usesingle (s1) at (4,-2) {$S1$};
\useedge (c1) edge[bend left=-0]  node[uselab]{$in12$}(m2);
 \useedge (c2) edge[bend left=-0]  node[uselab]{$in21$}(m1);
 \useedge (c3) edge[bend left=-0]  node[uselab]{$in32$}(m2);
 \useedge (c3) edge[bend left=-0]  node[uselab]{$in31$}(m1);
 \useedge (m1) edge[bend left=-0]  node[uselab]{$by1$}(s1);
 \useedge (m2) edge[bend left=-0]  node[uselab]{$by2$}(s2);
		\end{mytikz}
};
\end{mytikz}
~~
 \begin{mytikz}
\graphsingle (sg) {
\graphname{start graph $G_{0}$:}
    \begin{mytikz} 	
	\single (c1) at (0,0) {:$C1$};  
	\single (c2) at (0,-1) {:$C2$};  
	\single (c3) at (0,-2) {:$C3$}; 
 	\end{mytikz}
};
\end{mytikz}
~~~
\begin{mytikz}
\graphsingle (g1) {
\graphname{rule $pc(C1)$:}
    \begin{mytikz}  
    \begin{lhsonly}
	\single (c1) at (0,0) {:$C1$};
	\end{lhsonly}
    \begin{naconly}
	\single (m2) at (0,1.5) {:$M2$};   
    \edge (c1) -- node[uselab, text=red]{:$in12$}(m2);
    \end{naconly}
    \begin{rhsonly}
    	\single (s1) at (1,1) {:$S1$}; 
    \end{rhsonly}
 	\end{mytikz}
};
\end{mytikz}
~~
\begin{mytikz}
\graphsingle (g1) {
\graphname{rule $pc(C2)$:}
    \begin{mytikz}  
    \begin{lhsonly}
	\single (c2) at (0,0) {:$C2$};
	\end{lhsonly}
    \begin{naconly}
	\single (m1) at (0,1.5) {:$M1$};   
    \edge (c2) -- node[uselab, text=red]{:$in21$}(m1);
    \end{naconly}
    \begin{rhsonly}
    	\single (s2) at (1,1) {:$S2$}; 
    \end{rhsonly}
 	\end{mytikz}
};
\end{mytikz}
~~~
\begin{mytikz}
\graphsingle (g1) {
\graphname{rule $sm(S1)$:}
    \begin{mytikz}  
	\single (s1) at (0,0) {:$S1$}; 
    \begin{naconly}
	\single (m1) at (0,1.25) {:$M1$};   
    \edge (m1) -- node[uselab, text=red]{:$by1$}(s1);
    \end{naconly}
    \begin{rhsonly}
	\single (m1) at (0,-1.25) {:$M1$};   
    \edge (m1) -- node[uselab,text=green!70!black]{:$by1$}(s1);    	
    \end{rhsonly}
 	\end{mytikz}
};
\end{mytikz}
~~
\begin{mytikz}
\graphsingle (g1) {
\graphname{rule $sm(S2)$:}
    \begin{mytikz}  
	\single (s2) at (0,0) {:$S2$}; 
    \begin{naconly}
	\single (m2) at (0,1.25) {:$M2$};   
    \edge (m2) -- node[uselab, text=red]{:$by2$}(s2);
    \end{naconly}
    \begin{rhsonly}
	\single (m2) at (0,-1.25) {:$M2$};   
    \edge (m2) -- node[uselab,text=green!70!black]{:$by2$}(s2);    	
    \end{rhsonly}
 	\end{mytikz}
};
\end{mytikz}
\\
\begin{mytikz}
\graphsingle (g1) {
\graphname{rule $jm(C1,M2)$:}
    \begin{mytikz}  
	\single (c1) at (0,0) {:$C1$}; 
	\single (m2) at (0,-1.5) {:$M2$}; 
    \begin{naconly}
    \edge (c1) -- node[uselab, text=red]{:$in12$}(m2);
    \end{naconly}
    \begin{rhsonly}
    \edge[-] (c1.east) edge[->,bend left=45] node[uselab,text=green!70!black]{:$in12$}(m2.east);    	
    \end{rhsonly}
 	\end{mytikz}
};
\end{mytikz}
\quad
\begin{mytikz}
\graphsingle (g1) {
\graphname{rule $jm(C2,M1)$:}
    \begin{mytikz}  
	\single (c2) at (0,0) {:$C2$}; 
	\single (m1) at (0,-1.5) {:$M1$}; 
    \begin{naconly}
    \edge (c2) -- node[uselab, text=red]{:$in21$}(m1);
    \end{naconly}
    \begin{rhsonly}
    \edge[-] (c2.east) edge[->,bend left=45] node[uselab,text=green!70!black]{:$in21$}(m1.east);    	
    \end{rhsonly}
 	\end{mytikz}
};
\end{mytikz}
\qquad
\begin{mytikz}
\graphsingle (g1) {
\graphname{rule $jm(C3,M1)$:}
    \begin{mytikz}  
	\single (c3) at (0,0) {:$C3$}; 
	\single (m1) at (0,-1.5) {:$M1$}; 
    \begin{naconly}
    \edge (c3) -- node[uselab, text=red]{:$in31$}(m1);
    \end{naconly}
    \begin{rhsonly}
    \edge[-] (c3.east) edge[->,bend left=45] node[uselab,text=green!70!black]{:$in31$}(m1.east);    	
    \end{rhsonly}
 	\end{mytikz}
};
\end{mytikz}
\quad
\begin{mytikz}
\graphsingle (g1) {
\graphname{rule $jm(C3,M2)$:}
    \begin{mytikz}  
	\single (c3) at (0,0) {:$C3$}; 
	\single (m2) at (0,-1.5) {:$M2$}; 
    \begin{naconly}
    \edge (c3) -- node[uselab, text=red]{:$in32$}(m2);
    \end{naconly}
    \begin{rhsonly}
    \edge[-] (c3.east) edge[->,bend left=45] node[uselab,text=green!70!black]{:$in32$}(m2.east);    	
    \end{rhsonly}
 	\end{mytikz}
};
\end{mytikz}
\end{tabular}}
\end{center}
\vspace{.1cm}

\begin{example}[Client-Server]
\label{ex:Client-Server}
The figure above
shows a safe grammar which depicts a simple Client-Server model. Rather than providing typing in the usual sense, in a safe grammar such as this one the type graph $TG$ plays the role of a global name space for all graphs reachable from the start graph. This is common in models where we have an upper limit on the number of nodes and edges that may exist during the lifetime of the system, or where the grammar represents a semantic object, such as an occurrence grammar obtained by unfolding where rules represent transformation occurrences.

We use an integrated notation 
merging left- and right- hand side graphs into a single rule graph $L \cup R$. We indicate by colours and labels which elements are required and preserved (black), required but deleted (blue), newly created (green), and forbidden (red). That means, the left-hand side $L$ is given by all black and blue elements with the forbidden elements of $N$ in red, the right-hand side by all black and green ones, and the interface $K$ by the black elements only.\footnote{This notation could be ambiguous if a rule has more than one NAC: in that case an additional explanation is needed.} A typical scenario is as follows: initially there are three clients, one for each type. Two of the clients can be promoted to a server if they are not attached to a meeting (using rules $pc(C1)$ and $pc(C2)$ respectively which have a constraint of shape $OUT$). A server can start a meeting if it doesn't already have one (using rules $sm(S1)$ and $sm(S2)$ respectively which have a constraint of shape $IN$). A client can join a meeting if it is not already in that meeting (using rules $jm(C1,M2)$, $jm(C2,M1)$, $jm(C3,M1)$, and $jm(C3,M2)$ respectively which have a constraint of shape $E$). 
\end{example}

\begin{equation}
\label{eq:seq-independence} 
 \xymatrix@R=1.5pc@C=1.0cm{ 
  {N_{1}} & & & & {N_{2}}\\
      {{L_1}}  \ar[u]^{n_{1}}  \ar@{->}[d]_{{m_1}}
     & {{K_1}}  \ar@{->}[l]_{{l_1}} \ar@{->}[r]^{{r_1}} \ar[d]_{n_1} 
     & {{R_1}}  \ar[dr]_(.5){{m^{*}_1}} \ar@{-->}@/^0.3cm/[drrr]^(.7){i} 
   & & {{L_2}}  \ar[u]^{n_{2}}    \ar@{->}[dl]^(.6){{m_2}} \ar@{-->}@/_0.3cm/[dlll]_(.7)j 
     & {{K_2}}  \ar@{->}[l]_{{l_2}} \ar@{->}[r]^{{r_2}} \ar[d]^{n_2} 
     & {{R_2}} \ar[d]^{{m^{*}_2}}
 \\  
     {G} 
     & {D_1} \ar[l]^{g_1} \ar[rr]_{h_1}
     &  
     & {{H_{1}}} & 
     & {D_2} \ar[ll]^{g_2} \ar[r]_{h_2}
     & {H_2} \\
 }
 \end{equation}
 
\noindent
\emph{Independence and causality among transformations.}
In the analysis of systems modeled as graph grammars, it is often important to identify when two consecutive (conditional) transformations of a derivation are \emph{sequentially independent}, in the sense that their order can be switched obtaining an ``equivalent'' derivation, and when instead there is a causal relationship among them. For plain transformations, sequential independence of two consecutive transformations like those in Diagram~\eqref{eq:seq-independence} 
(ignoring constraints $n_{1}$ and $n_{2}$)  is characterized (see e.g.~\cite{corradini1997algebraic}) by requiring the existence of two morphisms $i:R_{1}\to D_{2}$ and $j: L_{2} \to D_{1}$ such that $m_{1}^{*} = g_{2} \circ i $ and $m_{2 }= h_{1} \circ j$.
For the conditional case  \cite{lambers2009certifying} additionally it is required that  $m'_{2} = g_{1} \circ j \models n_{2}$ for each $n_{2} \in \nacs(p_{2})$, and that the match of $p_{1}$ in $H'_{1}$ induced by the transformation $G \Rrel{p_{2},m'_{2}} H'_{1}$   satisfies each $n_{1} \in \nacs(p_{1})$.

%


\subsection{Morphisms of Conditional Grammars}
\label{sec:grammar-morphisms}
The conditional grammar morphisms that we need in this paper are more general than, e.g., those of~\cite{corradini2019unfolding}, because we want to relate a grammar where NACs are encoded by adding complemented items to the type graph of the original grammar. Thus retyping the graphs along the morphism should allow to delete part of the structure. This can be obtained by relating the type graph of the source and the target grammar with a span, as e.g.~in~\cite{DBLP:conf/tagt/BaldanCM98}. Furthermore rules can be mapped to ancestor rules 
and constraints are reflected only up to subsumption.

A morphism $g: TG_0 \to TG_1$ induces a pullback functor $g^<: \mathbf{Graphs}_{TG_1} \to \mathbf{Graphs}_{TG_0}$
mapping an instance graph $\langle G_1, t_{G_1}\rangle$ over $TG_1$ to  the graph  $\langle G_0, t_{G_0}\rangle$ obtained as the pullback $TG_0 \lrel {t_{G_0}} G_0 \to G_1$ of $t_{G_1}$ and $g$. 
Dually, 
a morphism $h: TG_0 \to TG_2$ between type graphs induces a translation of instances obtained by the covariant retyping functor $h^>: \mathbf{Graphs}_{TG_0} \to \mathbf{Graphs}_{TG_2}$, defined by post-composition of $h$ with the typing morphism.
Note that this functor only affects the typing, 
hence $h^>$  acts as identity on morphisms.
Combining both actions, given a span $f = (TG_1 \lrel g TG_0 \rrel h TG_2)$ relating $TG_1$ and $TG_2$ we define functor $f^{<>}: \mathbf{Graphs}_{TG_1} \to \mathbf{Graphs}_{TG_2}$ by the composition $h^> \circ g^<$.

\begin{definition}[Conditional grammar morphisms]
\label{def:cond-morphisms}
Let $\mathcal{CG}_{i} = \langle  TG_i, G_{in,i},$ $ P_i, \pi_i, \nacs_i \rangle$ for $i \in \{1,2\}$ be conditional graph grammars. A \emph{CGG morphism} $f: \mathcal{CG}_{1} \rightarrow
\mathcal{CG}_{2}$ 
is a pair $\langle f_{TG}, f_P \rangle$ where 
$f_{TG} = (TG_1 \lrel {f_<} TG_0 \rrel{f_>} TG_2)$
is a span between type graphs $TG_1$ and $TG_2$ with $f_{<}$ mono,   and $f_P: P_1 \rightarrow P_2$ is a mapping of rule names such that 
%
%
%
\begin{enumerate}
\item The input graph is preserved (up to retyping): $f_{TG}^{<>}(G_{in,1}) = G_{in,2}$

\noindent
\begin{minipage}[c]{.5\linewidth}
\item Rules are mapped to ancestor rules: For all $p \in P_1$, if $f_P(p) = p'$ and $\pi_2(p') = \langle L' \leftarrow K' \rightarrow R'\rangle$, then there is a  DPO diagram like~\eqref{dia:morphism-prod} where the vertical morphisms are mono.
\end{minipage}
\begin{minipage}[c]{.49\linewidth}
    \vspace{-2ex}
    \begin{align}
      \label{dia:morphism-prod}
      \begin{array}[c]{l}
        \xymatrix@R=4ex@C=4ex{
          L' 
          \ar[d]_{i_{p}}
          & 
          K'
          \ar[l]
          \ar[r]
          \ar[d]
          &
          R'
          \ar[d]
          \\
          f_{TG}^{<>}(L_{p})
          &
          f_{TG}^{<>}(K_{p})
          \ar[l]
          \ar[r]
          &
          f_{TG}^{<>}(R_{p})
        }
      \end{array}
    \end{align}
  \end{minipage}



\noindent
\begin{minipage}[c]{.5\linewidth}
\item Constraints are reflected: For each rule $p\in P_{1}$ (with $f_P(p) = p'$) and each mono $h: L_{p} \to N$ over $TG_1$, if there is a constraint $n'  \in \nacs_2(p')$ such that square~\eqref{dia:nac} is a pushout, then there is a constraint $n \in \nacs_1(p)$ such that $n \models h$.
\end{minipage}
\begin{minipage}[c]{.49\linewidth}
    \vspace{-2ex}
    \begin{align}
      \label{dia:nac}
      \begin{array}[c]{l}
        \xymatrix@R=4ex@C=6ex{
          L' 
          \ar[d]_{i_{p}} \ar[r]^{n'}
          & 
	N' 
          \ar[d]
          \\
          f_{TG}^{<>}(L_{p}) \ar[r]^{f_{TG}^{<>}(h)} 
          &
          f_{TG}^{<>}(N)
}      \end{array}
    \end{align}
  \end{minipage}
  

\end{enumerate}
\end{definition}

The definition of conditional grammar morphisms guarantees that morphisms preserve derivations: this will be pivotal for the results presented later. 

%

\begin{proposition}
\label{pr:CGGmorphisms}
Let $\langle f_{TG}, f_P \rangle: \mathcal{CG}_{1} \rightarrow
\mathcal{CG}_{2}$ be a CGG morphism. 
Then for each derivation $G_{in,1} = G_0 \Rrel{p_1} \cdots \Rrel{p_n} G_n$ of  $\mathcal{CG}_{1}$ there is a derivation  $G_{in,2} = f_{TG}^{<>}(G_0) \Rrel{f_P(p_1)} \cdots \Rrel{f_P(p_n)} f_{TG}^{<>}(G_n)$ of $\mathcal{CG}_{2}$.
\end{proposition}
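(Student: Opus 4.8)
The plan is to proceed by induction on the length $n$ of the derivation, reducing the statement to a single-step lemma: every conditional transformation $G \Rrel{p,m} H$ of $\mathcal{CG}_1$ gives rise to a conditional transformation $f_{TG}^{<>}(G) \Rrel{f_P(p),\,m'} f_{TG}^{<>}(H)$ of $\mathcal{CG}_2$ for a suitable match $m'$. The base case $n=0$ is exactly condition~1 of Definition~\ref{def:cond-morphisms}, giving $f_{TG}^{<>}(G_{in,1}) = G_{in,2}$, and the inductive step simply chains the single-step lemma applied to $G_{i-1} \Rrel{p_i,m_i} G_i$ onto the derivation already built for the prefix, since the lemma produces precisely $f_{TG}^{<>}(G_i)$ as result graph.

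For the DPO part of the single-step lemma, the key preliminary observation is that $f_{TG}^{<>} = (f_>)^{>} \circ (f_<)^{<}$ preserves pushouts. Indeed the pullback functor $(f_<)^{<}$ on the slice $\mathbf{Graphs}_{TG_1} = \mathbf{Graphs}/TG_1$ of a presheaf topos has a right adjoint (the category being locally cartesian closed), hence preserves colimits, while $(f_>)^{>}$ is a post-composition functor, a left adjoint, and likewise preserves colimits. Applying $f_{TG}^{<>}$ to the two pushouts of the DPO diagram~\eqref{dia:derivation-diagram-def} of $G \Rrel{p,m} H$ therefore yields a DPO diagram for the retyped rule $f_{TG}^{<>}(\pi_1(p))$ at match $f_{TG}^{<>}(m)$, producing $f_{TG}^{<>}(H)$. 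I then invoke condition~2: the rule $\pi_2(p') = (L' \leftarrow K' \to R')$, with $p' = f_P(p)$, is an ancestor of $f_{TG}^{<>}(\pi_1(p))$ via the two pushouts of~\eqref{dia:morphism-prod}. Stacking these vertically on top of the two pushouts just obtained and using pushout composition, I get a DPO diagram whose top row is $\pi_2(p')$ and whose bottom row is $f_{TG}^{<>}(G) \leftarrow f_{TG}^{<>}(D) \to f_{TG}^{<>}(H)$, i.e.\ a transformation with match $m' := f_{TG}^{<>}(m) \circ i_p$. Since $(f_<)^{<}$ preserves monos (pullbacks of monos are monos) and $(f_>)^{>}$ is the identity on underlying morphisms, $f_{TG}^{<>}(m)$ is mono, and $i_p$ is mono by hypothesis, so $m'$ is a legal match.

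It remains to verify the NAC part, namely $m' \models \nacs_2(p')$, which I would argue by contradiction. Suppose $m' \not\models n'$ for some $n' \in \nacs_2(p')$, witnessed by a mono $q': N' \to f_{TG}^{<>}(G)$ with $q' \circ n' = m'$. From this violation I construct a constraint $h: L_p \to N$ over $TG_1$ together with a mono $q: N \to G$ satisfying $q \circ h = m$ (so that $m \not\models h$) and making square~\eqref{dia:nac} a pushout. Concretely, $N$ is obtained by adjoining to $L_p$ the ``negative items'' of $N'$ — the part of $N'$ outside $n'(L')$, which $q'$ maps into the underlying graph of $f_{TG}^{<>}(G)$, itself a subgraph of $G$ since $f_<$ is mono — glued along the shared border and typed over $TG_1$ through $G$; then $h$ is the evident inclusion and $q$ combines $m$ with $q'$. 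Given such an $h$, condition~3 supplies $n \in \nacs_1(p)$ with $n \models h$; since $m \not\models h$, subsumption yields $m \not\models n$, contradicting the validity of $G \Rrel{p,m} H$, which requires $m \models \nacs_1(p)$. This contradiction establishes $m' \models \nacs_2(p')$ and completes the single-step lemma.

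I expect the main obstacle to lie in this last construction of $(N, h, q)$ from the target-side violation $q'$. One must check that the graph built by transporting the negative items of $N'$ back through the inclusion $f_{TG}^{<>}(G) \hookrightarrow G$ is well-typed over $TG_1$, that the resulting $q: N \to G$ is a mono, and crucially that retyping reproduces the pushout~\eqref{dia:nac} relating $n'$ with $f_{TG}^{<>}(h)$. All three points hinge on the interaction of pullback and pushout along the span $f_{TG}$, and in particular on $f_<$ being mono, so that the retyping functor neither merges distinct negative items nor loses the factorisation $m' = q' \circ n'$; verifying this carefully (for instance via the van Kampen property of pushouts in $\mathbf{Graphs}$) is the delicate step, whereas the DPO part and the inductive bookkeeping are routine.
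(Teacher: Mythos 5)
Your proposal reconstructs the paper's proof almost step for step: the same reduction to a single-step lemma closed under concatenation (condition~1 of Definition~\ref{def:cond-morphisms} handling the input graph), the same argument that $f_{TG}^{<>}$ preserves pushouts followed by vertical pasting with the ancestor diagram~\eqref{dia:morphism-prod} to get the match $f_{TG}^{<>}(m)\circ i_p$, and the same proof by contradiction for the NAC part via condition~3. Your justification of pushout preservation (pullback functors between slices of a presheaf topos are left adjoints by local cartesian closedness) is a valid, in fact more general, alternative to the paper's appeal to adhesivity; and your concrete gluing of the negative items of $N'$ onto $L_p$ is exactly the paper's composite of the two pushouts \cycl{1} and \cycl{2} in Diagram~\eqref{eq:reflectingNacs}, i.e.\ $N$ is the pushout of $N' \leftarrow L_{p'} \to f_{TG}^{<>}(L_p) \to L_p$.

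The step you yourself flag as delicate --- that the induced occurrence $q: N \to G$ with $q \circ h = m$ is a \emph{mono} --- is, however, a genuine gap, and it cannot be closed by a van Kampen argument, because it can simply fail: if the violating occurrence of $N'$ in $f_{TG}^{<>}(G)$ sends a negative item of $N'$ onto an element of the image of $f_{TG}^{<>}(m)$ that is \emph{not} in the image of $f_{TG}^{<>}(m)\circ i_p$, then the mediating morphism $q$ identifies that negative item with the corresponding element of $L_p$ and is not injective, so under the paper's mono-based definition of satisfaction you cannot conclude $m \not\models h$. A concrete instance: let $p'$ have empty left-hand side and a NAC forbidding any $A$-node, and let $p$ be the identity rule on a single $A$-node with a NAC forbidding a \emph{second} $A$-node; with the identity span of type graphs and $f_P(p) = p'$ all three conditions of Definition~\ref{def:cond-morphisms} hold, $p$ applies to the one-node start graph, yet $p'$ applies nowhere, so even the statement itself fails without further hypotheses. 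You should know that the paper's own proof has exactly the same hole --- it obtains $q'$ with $m = q'\circ h$ from the universal property of the pushout and immediately reads this as $m \not\models h$, never checking injectivity of $q'$ --- so your reconstruction is faithful to the paper, but closing the argument genuinely requires either defining NAC satisfaction via arbitrary (not necessarily injective) occurrences, or strengthening condition~3 so that occurrences of $N'$ overlapping the match outside $i_p(L_{p'})$ are also reflected.
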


\begin{proof}
Let us denote by ${f}^{<}: \mathbf{Graphs}_{TG_{1}} \to \mathbf{Graphs}_{TG_{0}}$ the pullback functor induced by $f_{<}: TG_{0} \to TG_{1}$,  by ${f}^{>}:  \mathbf{Graphs}_{TG_{0}} \to \mathbf{Graphs}_{TG_{2}}$ the functor obtained by post-composition with $f_{>} :TG_{0} \to TG_{2}$, and by $\mathbf{f}$ their composition.

 We show that if  $G \Rrel{p, m} H$ is a conditional transformation in $\mathcal{CG}_1$ then  $\mathbf{f}(G) \Rrel{p', \mathbf{f}(m)\circ i_{p}} \mathbf{f}(H)$
 is a conditional transformation in $\mathcal{CG}_2$, where $p' = f_P(p)$ and $i_{p}: L_{p'} \to \mathbf{f}(L_{p})$ is as in Diagram~\eqref{dia:morphism-prod}. This fact extends to arbitrary derivations by concatenation and by observing that $\mathbf{f}(G_{in,1}) = G_{in,2}$.
For the DPO diagram of transformation
$G \Rrel{p, m} H$ it is sufficient to observe that pushouts are preserved both by
${f}^{<}$, because $\mathbf{Graphs}$ is an adhesive category \cite{lack2004adhesive},
and by functor ${f}^{>}$ because colimits are computed pointwise in slice categories. Therefore we obtain the required DPO diagram witnessing $\mathbf{f}(G) \Rrel{p', \mathbf{f}(m)\circ i_{p}} \mathbf{f}(H)$ by composing the DPO of Diagram~\eqref{dia:morphism-prod} with the image  via $\mathbf{f}$ of $G \Rrel{p, m} H$. 

    \begin{equation}
   \label{eq:reflectingNacs}
\xymatrix@R=4ex@C=5ex{
 & N \ar@{}[rrd]|{\cycl{2}}   \ar@{..>}@/_0.4cm/[ddl]_{\color{gray} t_{N}} \ar@{-->}@/_0.6cm/[ddd]^{q'} 
 & & X  \ar@{..>}@/_0.4cm/[ddl]\ar@{..>}@/^0.3cm/[ddr] \ar@{-->}[ll]_{} 
 & N'  \ar@{-->}[l] \ar@/^1.5cm/[dddl]^{q} \ar@{..>}@/^0.3cm/[dd]  \ar@{}[dl]|(.4){\cycl{1}}
 \\
 & L_{p} \ar[dd]^(.7){m} \ar@{-->}[u]_{h} \ar@{..>}@/_0.2cm/[dl]
 & & \mathbf{f}(L_{p}) \ar[dd]^(.7){\mathbf{f}(m)} \ar@{-->}[u]_{x}  \ar@{->}[ll]  \ar@{..>}@/_0.2cm/[dl]\ar@{..>}@/^0.2cm/[dr] 
 & L_{p'} \ar[l]_{i_{p}} \ar[u]^{n'} \ar@{..>}[d]
 \\
 {\color{gray}TG_{1}} 
 & & {\color{gray} TG_{0} \ar@{..>}[ll]_(.3){ \color{gray}f_{<}} \ar@{..>}[rr]^(.7){ \color{gray}f_{>}} }
 &  & { \color{gray}TG_{2}}
 \\
 & G \ar@{..>}@/^0.3cm/[ul] 
 & &  \mathbf{f}(G) \ar@{..>}@/_0.3cm/[ur] \ar@/^0.3cm/@{..>}[ul]  \ar@{->}[ll]_{f^{*}_{G}}
}      
 \end{equation}
 
It remains to show that match $\mathbf{f}(m)\circ i_{p}$ satisfies all the constraints in $\mathcal{N}_{2}(p')$, assuming that $m$ satisfies all those in $\mathcal{N}_{1}(p)$. We proceed by contradiction, assuming that there is a constraint $n': L_{p'} \to N' \in \nacs_2(p')$  not satisfied by $\mathbf{f}(m)\circ i_{p}$, i.e., there exists an injective morphism $q: N' \to \mathbf{f}(G)$ such that $q \circ n' = \mathbf{f}(m) \circ i_{p}$. In Diagram~\eqref{eq:reflectingNacs} we show the relevant graphs and morphisms in $\mathbf{Graphs}$, with the corresponding typing morphisms.

Since $q \circ n' = \mathbf{f}(m) \circ i_{p}$ commutes in $\mathbf{Graphs}_{TG_{2}}$, it also commutes in $\mathbf{Graphs}$; also, its typing to $TG_{2}$ factorizes through $TG_{0}$: for $\mathbf{f}(m)$ by the definition of functor $\mathbf{f}$, and for $L_{p'}$ and $N'$ by composing $q$ with the typing of $\mathbf{f}(G)$. 
Let squares \cycl{1} and \cycl{2} be built as pushouts in $\mathbf{Graphs}$.
%
By the pushout property, since the relevant squares are easily shown to commute, there are unique morphisms 
$t_{N}: N \to TG_{1}$ (such that $L_{p} \to TG_{1} = t_{N}\circ h$, showing that $h$ is in $\mathbf{Graphs}_{TG_{1}}$) and  $q': N \to G$,  such that $(\dagger)~m = q' \circ h$.     

Recalling that $\mathbf{Graphs}$ is adhesive, observe that $h$ is mono (because $n'$ is mono and pushouts preserve monos) and that $\mathbf{f}(h) = x$ (up to an iso that we can safely ignore, because pushouts along monos are pullbacks).  Since \cycl{1} is a pushout, condition 3 of the definition of morphism applies, thus there is a constraint ${n} \in \nacs_{1}(p)$ such that ${n} \models h$. But $(\dagger)$  shows that $m \not \models h$ and thus $m \not \models n$, contradicting the assumption. 
%
 \end{proof}

Recall that that sequential independence of conditional transformations is defined in terms of (1) existence of certain morphisms between given graphs and commutativity requirements, and (2) satisfaction of certain NACs by suitable matches. Properties (1) are easily shown to be preserved by any functor, and Proposition~\ref{pr:CGGmorphisms} guarantees that  (2)  is preserved by grammar morphisms. Therefore grammar morphisms preserve sequential independence.
   
   \begin {corollary}
Let $\langle f_{TG}, f_P \rangle: \mathcal{CG}_{1} \rightarrow
\mathcal{CG}_{2}$ be a CGG morphism. 
If $G_0 \Rrel{p_1} G_{1} \Rrel{p_2} G_{2}$ are sequential independent transformations 
 for  $\mathcal{CG}_{1}$ then $f_{TG}^{<>}(G_0) \Rrel{f_P(p_1)} f_{TG}^{<>}(G_1) \Rrel{f_P(p_2)} f_{TG}^{<>}(G_2)$  are sequential independent transformations for $\mathcal{CG}_{2}$.
%
%
\end{corollary}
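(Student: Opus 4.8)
The plan is to reduce the claim to Proposition~\ref{pr:CGGmorphisms} by separating the two ingredients of conditional sequential independence, exactly along the decomposition recalled just above the statement. Write the two independent steps as in Diagram~\eqref{eq:seq-independence}, identifying $G = G_0$, $H_1 = G_1$ and $H_2 = G_2$, and let $\mathbf{f} = f_{TG}^{<>}$ be the composite functor. By Proposition~\ref{pr:CGGmorphisms} the two steps map to conditional transformations $\mathbf{f}(G_0) \Rrel{p_1'} \mathbf{f}(G_1)$ and $\mathbf{f}(G_1) \Rrel{p_2'} \mathbf{f}(G_2)$ of $\mathcal{CG}_2$, whose DPO diagrams arise, as in the proof of that proposition, by pasting the $\mathbf{f}$-image of each original DPO onto the ancestor-rule diagram~\eqref{dia:morphism-prod}. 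In particular the context of the image of the $i$-th step is $\mathbf{f}(D_i)$, and its comatch (resp. match) is the $\mathbf{f}$-image of the original comatch (resp. match) precomposed with the relevant vertical mono of~\eqref{dia:morphism-prod}; I will write $i_{p_2}: L_{p_2'} \to \mathbf{f}(L_2)$ for that mono on the left-hand side of $p_2$, $r_{p_1}: R_{p_1'} \to \mathbf{f}(R_1)$ for the one on the right-hand side of $p_1$, and $i_{p_1}$ for the analogous left-hand mono of $p_1$.

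First I would establish the \emph{structural} half of independence for the images. Setting $i' = \mathbf{f}(i) \circ r_{p_1}: R_{p_1'} \to \mathbf{f}(D_2)$ and $j' = \mathbf{f}(j) \circ i_{p_2}: L_{p_2'} \to \mathbf{f}(D_1)$, the two commutativity requirements follow purely by functoriality: since the image comatch of the first step is $\mathbf{f}(m_1^*) \circ r_{p_1}$ and the image match of the second is $\mathbf{f}(m_2) \circ i_{p_2}$, one computes $\mathbf{f}(g_2) \circ i' = \mathbf{f}(g_2 \circ i) \circ r_{p_1} = \mathbf{f}(m_1^*) \circ r_{p_1}$ and $\mathbf{f}(h_1) \circ j' = \mathbf{f}(h_1 \circ j) \circ i_{p_2} = \mathbf{f}(m_2) \circ i_{p_2}$, which are exactly the identities required of $i'$ and $j'$. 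This half uses nothing beyond preservation of composition and of commuting squares.

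For the \emph{NAC} half I would apply Proposition~\ref{pr:CGGmorphisms} to the swapped order supplied by the hypothesis. Conditional independence gives the match $m_2' = g_1 \circ j: L_2 \to G_0$ of $p_2$ directly in $G_0$ with $m_2' \models \nacs_1(p_2)$, hence a conditional transformation $G_0 \Rrel{p_2, m_2'} H'_1$, together with the induced match $m_1''$ of $p_1$ in $H'_1$ with $m_1'' \models \nacs_1(p_1)$, hence a conditional transformation $H'_1 \Rrel{p_1, m_1''} G_2$. Applying the proposition to this swapped conditional derivation produces $\mathbf{f}(G_0) \Rrel{p_2', \mathbf{f}(m_2') \circ i_{p_2}} \mathbf{f}(H'_1) \Rrel{p_1', \mathbf{f}(m_1'') \circ i_{p_1}} \mathbf{f}(G_2)$ in $\mathcal{CG}_2$, so that $\mathbf{f}(m_2') \circ i_{p_2} \models \nacs_2(p_2')$ and $\mathbf{f}(m_1'') \circ i_{p_1} \models \nacs_2(p_1')$. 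It then remains to recognise these as the matches demanded by the definition of independence for the images: the computation $\mathbf{f}(g_1) \circ j' = \mathbf{f}(g_1 \circ j) \circ i_{p_2} = \mathbf{f}(m_2') \circ i_{p_2}$ shows that the match of $p_2'$ obtained by the swap coincides with the one to which the proposition applies, and the intermediate graph of the image swap is $\mathbf{f}(H'_1)$, in which the induced match of $p_1'$ is $\mathbf{f}(m_1'') \circ i_{p_1}$.

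I expect the single delicate point to be bookkeeping rather than a genuine obstacle. Because rules are mapped to \emph{ancestor} rules, every match and morphism on the $\mathcal{CG}_2$ side must be read as an $\mathbf{f}$-image precomposed with a vertical mono of~\eqref{dia:morphism-prod}, and the argument only closes if the matches occurring in the independence conditions for the image derivation are \emph{literally} those $\mathbf{f}$-images; verifying this for the \emph{induced} match of $p_1'$ requires observing that $\mathbf{f}$ preserves the pushouts that define the induced match, so that the induced match of the image agrees with the image of the induced match. Once these identifications are secured, preservation of NAC satisfaction is wholly delegated to Proposition~\ref{pr:CGGmorphisms} (whose clause~3 of Definition~\ref{def:cond-morphisms} carries the load through the pushout construction of Diagram~\eqref{eq:reflectingNacs}), while preservation of the structural morphisms $i'$ and $j'$ is immediate from functoriality; combining the two halves yields sequential independence of the image steps.
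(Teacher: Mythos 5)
Your proposal is correct and takes essentially the same approach as the paper: the paper's own proof (the paragraph immediately preceding the corollary) likewise splits conditional sequential independence into the structural part (existence of the morphisms $i$, $j$ and commutativity), preserved by any functor, and the NAC-satisfaction part, delegated to Proposition~\ref{pr:CGGmorphisms}. Your version merely spells out the bookkeeping the paper leaves implicit, namely the precomposition with the vertical monos of Diagram~\eqref{dia:morphism-prod} arising from the ancestor-rule mapping and the identification of the induced match in the swapped image derivation.
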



\section{Encoding NACs for Safe Conditional Grammars}
\label{sec:SafeCase}
\label{sec:encoding-safe}

Following the outline sketched in the Introduction,  
given a conditional grammar $\mathcal {CG}$ we define now a grammar $E(\mathcal {CG})$ where the NACs are encoded in the reachable graphs using suitable \emph{complemented items}. Grammar $E(\mathcal {CG})$ keeps essentially the same NACs as $\mathcal {CG}$,  but they will be shown to be redundant because of their encoding. For this section 
let $\mathcal{CG} = \langle TG, G_{in}, P, \pi, \nacs\rangle$ be a fixed \emph{safe}  conditional grammar, where all constraints are \emph{incremental}. 
The first step consists of building an \emph{enriched type graph} $\overline{TG}$ which extends $TG$ by adding {complemented items} for each constraint: intuitively, the presence of such items in a graph guarantees that the constraint is satisfied.

\begin{definition}[enriched type graph]
\label{def:complementedTG}
Let
$sh(\mathcal{CG}) = \{n^{-}: L^{-}_{p} \to N^{-} \mid 
p \in P, n \in \nacs(p) \}$ be the set of all shapes of constraints appearing in rules of $\mathcal{CG}$.
Consider the diagram in $\mathbf{Graphs}$ made of all arrows of $sh(\mathcal{CG})$, of $TG$, and for each shape $n^{-}: L^{-}_{p} \to N^{-} $
of the typing morphism $t_{L_{p}^{-}}: L_{p}^{-} \to TG$. Then the \emph{enriched type graph} $\ov{TG}$ is the colimit of this diagram.
\end{definition}

\noindent
\begin{minipage}{.59\textwidth}
Intuitively,  the colimit adds to $TG$  for each shape a copy of all the red items (see Diagram~\eqref{eq:shapes}), their \emph{complement}, typed over $\ov{TG}$.  
Diagram~\eqref{eq:compl} shows that the body  $N^{-}$ of  shape $n^{-}$ has two monos to $\ov{TG}$:  the colimit injection $\ov{t}_{N^{-}}$ as well as $in_{TG} \circ t_{N^{-}}$.    These monos are different (because the shape is not an iso by assumption) but coincide on the border $L^{-}_{p}$.   
\end{minipage}
\begin{minipage}{.4\textwidth}
\vspace{-.7cm}
\begin{equation}
\label{eq:compl}
\xymatrix@R=3ex{
&  N \ar@{-->}@/_0.4cm/[ddl]_{\ov{t}_{N}}  \ar@{..>}@/^0.4cm/[ddr]^(.4){t_{N}}
& &  L_{p} \ar@{..>}[ddl] \ar[ll]_{n} \\
& N^{-} \ar[u] \ar@{-->}@/_0.3cm/[dl]^{\ov{t}_{N^{-}}} \ar@{..>}[dr]|{t_{N^{-}}}
& & L^{-}_{p}  \ar@{..>}[dl]^{t_{L^{-}_{p}}}  \ar[ll]_{n^{-}} \ar[u] \\
\ov{TG} \ar@{}[rrruu]|(.3){ }& & TG \ar@{-->}[ll]_{in_{TG}}
}
\end{equation}
\end{minipage}
They determine two different objects of $\mathbf{Graphs}_{\ov{TG}}$ that we shall denote $N^{-}$ ($= \langle N^{-},  in_{TG} \circ t_{N^{-}}\rangle$) and $\ov{N}^{-}$ ($= \langle N^{-},  \ov{t}_{N^{-}}\rangle$), respectively, calling the second \emph{the body's complement}.
Similarly, since the square in~\eqref{eq:compl} is a pushout, there is a mediating morphism $\ov{t}_{n}$ from  $N$  to $\ov{TG}$ (which is a mono by adhesivity), besides the mono $in_{TG} \circ t_{N}$, giving rise to two objects in $\mathbf{Graphs}_{\ov{TG}}$ that we will denote $\ov{N}$ and $N$, respectively. We call $\ov{n}: L_{p} \to \ov{N}$ the \emph{complemented constraint} of $n$.

We now introduce  a property on obiects of $\mathbf{Graphs}_{\ov{TG}}$, called the \emph{complementation invariant}.
Intuitively, a graph $G$ over  $\ov{TG}$ satisfies the invariant if for each constraint $n: L \to N$ of the original grammar, whenever the border of its shape ($L^{-}$) is present in $G$, then  the body $N^{-}$ of the shape  
is present as well if and only if  its complement $\ov{N}^{-}$ 
is not present.


\begin{definition}[complementation invariant]
\label{def:invariant}
Let  $sh(\mathcal{CG})$ and $\ov{TG}$ be as in Definition~\ref{def:complementedTG}.
Then, using a syntax reminiscent of \emph{nested application conditions} \cite{DBLP:journals/mscs/HabelP09}, the \emph{complementation invariant}
is defined as\quad
$\Phi_{inv} = \forall (n^{-}: L^{-} \to N^{-}) \in sh(\mathcal{CG}) \,.\, \forall L^{-} \,.\, \exists (L^{-} \to N^{-})~\mbox{\textsc{xor}}~\exists (L^{-} \to \ov{N}^{-})$
\end{definition}

Given a graph $G$ over $TG$ we can enrich it with the needed complemented items in order to obtain a graph $\inv{G}$ over $\ov{TG}$ that satisfies the invariant.

%

\begin{definition}[invariant closure]
\label{def:invariantClosure}
Given a graph $G$ typed over $TG$, its \emph{invariant closure} $\inv{G}$ is the graph typed over $\ov{TG}$ obtained as the colimit of the diagram including (a) graph $G$ typed over $\ov{TG}$ by $in_{TG}\circ t_{G}$, (b) for each shape  $n^{-}: L^{-} \to {N}^{-} \in sh(\mathcal{CG})$, the complemented shape $n^{-}: L^{-} \to \ov{N}^{-}$ and the inclusion $L^{-} \to G$ if and only if the border $ L^{-}$ is contained in $G$ but the body $N^{-}$ is not.
\end{definition}

It is easy to check that for each graph $G$ over $TG$, $in_{TG}^{<}(\inv{G}) \cong G$. Thus $\inv{\_}$ is an inverse to the object component of the retyping functor along  $in_{TG}$, but it is not itself a functor.

\begin{example}[Complemented Client-Server]
\label{ex:complementedCS}
The grammar of the figure below
is obtained by applying to the Client-Server grammar of Example~\ref{ex:Client-Server} the transformations  described in this section. The NACs are not represented (they are redundant, as we will see). The type graph $\ov{TG}$ and the start graph $\ov{G}_{0} = \inv{G_{0}}$ are obtained   according to Defs.~\ref{def:complementedTG} and~\ref{def:invariantClosure}. The  rules will be illustrated later on.
\end{example}

\noindent
\scalebox{.80}{ 
\begin{tabular}{*{4}{c}}
\multirow{2}{*}[2cm]{
\begin{mytikz}
\graphsingle (g1) {
\graphname{type graph $\overline{TG}$:}
    \begin{mytikz}  
\usesingle (c1) at (0,0) {$C1$};
\usesingle (c2) at (0,-3) {$C2$};
\usesingle (c3) at (0,-1.5) {$C3$};
\usesingle (m2) at (2,0) {$M2$};
 \usesingle (m22) at (-1.5,-1.15) {$\overline{M2}_1$};
  \usesingle (m2s2) at (3,-1.15) {$\overline{M2}_2$};
\usesingle (m1) at (2,-3) {$M1$};
  \usesingle (m11) at (-1.5,-1.7) {$\overline{M1}_1$};
 \usesingle (m1s1) at (3,-1.7) {$\overline{M1}_2$};
 \usesingle (s2) at (4,0) {$S2$};
 \usesingle (s1) at (4,-3) {$S1$};
\useedge (c1) edge[bend left=-0]  node[uselab]{$in12$}(m2);
 \useedge (c2) edge[bend left=-0]  node[uselab]{$in21$}(m1);
 \useedge (c3) edge[bend left=-20]  node[uselab]{$in31$}(m1);
 \useedge (c3) edge[bend left=-20]  node[uselab]{$in32$}(m2);
 \useedge (m1) edge[bend left=-0]  node[uselab]{$by1$}(s1);
 \useedge (m2) edge[bend left=-0]  node[uselab]{$by2$}(s2);
 \useedge (c1) edge[bend left=-0]  node[uselab]{$\overline{in12}_1$}(m22);
 \useedge (c2) edge[bend left=-0]  node[uselab]{$\overline{in21}_1$}(m11);
 \useedge (c1) edge[bend left=20]  node[uselab,above]{$\overline{in12}$}(m2);
 \useedge (c2) edge[bend left=-20]  node[uselab,below]{$\overline{in21}$}(m1);
  \useedge (c3) edge[bend left=20,looseness=1]  node[uselab]{$\overline{in32}$}(m2);
 \useedge (c3) edge[bend left=20,looseness=1]  node[uselab]{$\overline{in31}$}(m1);
 \useedge (m1s1) edge[bend left=-0]  node[uselab,left]{$\overline{by1}$}(s1);
 \useedge (m2s2) edge[bend left=-0]  node[uselab,left]{$\overline{by2}$}(s2);
		\end{mytikz}
};
\end{mytikz}}
&
\multirow{2}{*}[2cm]{
 \begin{mytikz}
\graphsingle (sg) {
\graphname{start graph $\overline{G_{0}}$:}
    \begin{mytikz} 	
	\single (c1) at (0,0) {:$C1$};  
	\single (c2) at (0,-3) {:$C2$}; 
	\single (c3) at (0,-1.5) {:$C3$}; 
\usesingle (m22) at (-1.5,-1) {$\overline{M2}_1$};
 \usesingle (m11) at (-1.5,-2) {$\overline{M1}_1$};
    \edge (c1) -- node[uselab]{:$\overline{in12}_1$} (m22);
    \edge (c2) -- node[uselab]{:$\overline{in21}_1$} (m11);
 	\end{mytikz}
};
\end{mytikz}}
&
\begin{mytikz}
\graphsingle (g1) {
\graphname{rule $\overline{pc(C1)}$:}
	\begin{mytikz}  
    \begin{lhsonly}
			\single (m22) at (2,0) {:$\overline{M2}_1$}; 
	\single (c1) at (0,0) {:$C1$};
     \edge (c1) -- node[uselab,text=blue]{:$\overline{in12}_1$}(m22);
	\end{lhsonly}
    \begin{rhsonly}
    	\single (s1) at (0,-1) {:$S1$};
    		\single (m11) at (2,-1) {:$\overline{M1}_2$}; 	 
     \edge (m11) -- node[uselab,text=green!70!black]{:$\overline{by1}$}(s1);
    \end{rhsonly}
 	\end{mytikz}
};
\end{mytikz}
&
\begin{mytikz}
\graphsingle (g1) {
\graphname{rule $\overline{pc(C2)}$:}
	\begin{mytikz}  
    \begin{lhsonly}
			\single (m11) at (2,0) {:$\overline{M1}_1$}; 
	\single (c2) at (0,0) {:$C2$};
     \edge (c2) -- node[uselab,text=blue]{:$\overline{in21}_1$}(m11);
	\end{lhsonly}
    \begin{rhsonly}
    		\single (m22) at (2,-1) {:$\overline{M2}_2$}; 	 
    	\single (s2) at (0,-1) {:$S2$};
     \edge (m22) -- node[uselab,text=green!70!black]{:$\overline{by2}$}(s2);
    \end{rhsonly}
 	\end{mytikz}
};
\end{mytikz}
\\
& & 
\begin{mytikz}
\graphsingle (g1) {
\graphname{rule $\overline{sm(S1)}$:}
    \begin{mytikz}  
	\single (s1) at (0,0) {:$S1$}; 
		\single (c2) at (2,-.75) {:$C2$}; 
			\single (c3) at (2,-1.5) {:$C3$}; 
	
	\begin{lhsonly}
	\single (m11) at (2,0) {:$\overline{M1}_2$}; 
     \edge (m11) -- node[uselab,text=blue]{:$\overline{by1}$}(s1);
	\end{lhsonly}
    \begin{rhsonly}
	\single (m1) at (0,-1.5) {:$M1$};   
    \edge (m1) -- node[uselab,text=green!70!black]{:$by1$}(s1);    	
    \edge (c2) -- node[uselab,text=green!70!black]{:$\overline{in21}$}(m1);  
      \edge (c3) -- node[uselab,text=green!70!black]{:$\overline{in31}$}(m1);  
    \end{rhsonly}
 	\end{mytikz}
};
\end{mytikz}
& 
\begin{mytikz}
\graphsingle (g1) {
\graphname{rule $\overline{sm(S2)}$:}
    \begin{mytikz}  
	\single (s2) at (0,0) {:$S2$}; 
		\single (c1) at (2,-.75) {:$C1$}; 
			\single (c3) at (2,-1.5) {:$C3$}; 
	
	\begin{lhsonly}
	\single (m22) at (2,0) {:$\overline{M2}_2$}; 
     \edge (m22) -- node[uselab,text=blue]{:$\overline{by2}$}(s2);
	\end{lhsonly}
    \begin{rhsonly}
	\single (m2) at (0,-1.5) {:$M2$};   
    \edge (m2) -- node[uselab,text=green!70!black]{:$by2$}(s2);    	
    \edge (c1) -- node[uselab,text=green!70!black]{:$\overline{in12}$}(m2);  
      \edge (c3) -- node[uselab,text=green!70!black]{:$\overline{in32}$}(m2);  
    \end{rhsonly}
 	\end{mytikz}
};
\end{mytikz}
\\
\begin{mytikz}
\graphsingle (g1) {
\graphname{rule $\overline{jm(C1,M2)}$:}
    \begin{mytikz}  
	\single (c1) at (0,0) {:$C1$}; 
	\single (m2) at (0,-1.5) {:$M2$}; 
	\begin{lhsonly}
	\edge (c1) edge[bend left=90] node[uselab,text=blue]{:$\overline{in12}$} (m2);
			\single (m22) at (2,0) {:$\overline{M2}_1$}; 
     \edge (c1) -- node[uselab,text=blue]{:$\overline{in12}_1$}(m22);	
     \end{lhsonly}
    \begin{rhsonly}
    \edge (c1) edge[bend left=0] node[uselab,text=green!70!black]{:$in12$}(m2);    	
    \end{rhsonly}
 	\end{mytikz}
};
\end{mytikz}
&
\begin{mytikz}
\graphsingle (g1) {
\graphname{rule $\overline{jm(C2,M1)}$:}
    \begin{mytikz}  
	\single (c2) at (0,0) {:$C2$}; 
	\single (m1) at (0,-1.5) {:$M1$}; 
	\begin{lhsonly}
	\edge (c2) edge[bend left=90] node[uselab,text=blue]{:$\overline{in21}$} (m1);
	\single (m22) at (2,0) {:$\overline{M1}_1$}; 
     \edge (c2) -- node[uselab,text=blue]{:$\overline{in21}_1$}(m22);
	\end{lhsonly}
    \begin{rhsonly}
    \edge (c2) edge[bend left=0] node[uselab,text=green!70!black]{:$in21$}(m1);    	
    \end{rhsonly}
 	\end{mytikz}
};
\end{mytikz}
&
\begin{mytikz}
\graphsingle (g1) {
\graphname{rule $\overline{jm(C3,M1)}$:}
    \begin{mytikz}  
	\single (c3) at (0,0) {:$C3$}; 
	\single (m1) at (0,-1.5) {:$M1$}; 
	\begin{lhsonly}
	\edge (c3) edge[bend left=90] node[uselab,text=blue]{:$\overline{in31}$} (m1);
	\end{lhsonly}
    \begin{rhsonly}
    \edge (c3) edge[bend left=0] node[uselab,text=green!70!black]{:$in31$}(m1);    	
    \end{rhsonly}
 	\end{mytikz}
};
\end{mytikz}
&
\begin{mytikz}
\graphsingle (g1) {
\graphname{rule $\overline{jm(C3,M2)}$:}
    \begin{mytikz}  
	\single (c3) at (0,0) {:$C3$}; 
	\single (m2) at (0,-1.5) {:$M2$}; 
	\begin{lhsonly} 
	\edge (c3) edge[bend left=90] node[uselab,text=blue]{:$\overline{in32}$} (m2);
	\end{lhsonly}
    \begin{rhsonly}
    \edge (c3) edge[bend left=0] node[uselab,text=green!70!black]{:$in32$}(m2);    	
    \end{rhsonly}
 	\end{mytikz}
};
\end{mytikz}
\\
\end{tabular}}
\vspace{.1cm}

\noindent
As examples, the invariants for the constraints of rules $pc(C1)$ and $jm(C1,M2)$ of Example~\ref{ex:Client-Server} (which coincide with their shapes) are the following:
 \begin{center}
 \begin{tabular}{*{8}{l}}
 invariant for ${pc(C1)}$: &
\begin{mytikz}
\node (n1) {$\forall~$};
\end{mytikz}
&
\scalebox{.6}{
\begin{mytikz}
\graphsingle (g1) [right=.1cm of n1] {
 \begin{mytikz}  
	\single (c1) at (0,0) {:$C1$};
	\end{mytikz}};
\end{mytikz}}
&
\begin{mytikz}
\node (n2) [right=.1cm of g1] {$.~\exists~\big($};
\end{mytikz}
&
\scalebox{.6}{
\begin{mytikz}
\graphsingle (g2) [right=.1cm of n2] {
    \begin{mytikz}  
	\single (c1) at (0,0) {:$C1$};
	\single (m2) at (2.5,0) {:$M2$};   
    \edge (c1) -- node[uselab]{:$in12$}(m2);
 	\end{mytikz}
};
\end{mytikz}}
&
\begin{mytikz}
\node (xor) [right=.1cm of g2] {$\big)~\mbox{\textsc{xor}}~\exists~\big($};
\end{mytikz}
&
\scalebox{.6}{
\begin{mytikz}
\graphsingle (g2) [right=.1cm of xor] {
    \begin{mytikz}  
	\single (c1) at (0,0) {:$C1$};
			\single (m22) at (2.5,0) {:$\overline{M2}_1$}; 
     \edge (c1) -- node[uselab]{:$\overline{in12}_1$}(m22);
 	\end{mytikz}
};
\end{mytikz}}
&
\begin{mytikz}
\node (n3) [right=.1cm of g2] {$\big)$};
\end{mytikz}
\\
invariant for ${jm(C1,M2)}$: &
\begin{mytikz}
\node (n1) {$\forall~$};
\end{mytikz}
&
\scalebox{.6}{
\begin{mytikz}
\graphsingle (g1) 
{
 \begin{mytikz}  
	\single (c1) at (0,0) {:$C1$};
	\single (m2) at (1,0) {:$M2$};
	\end{mytikz}};
\end{mytikz}
}	
&
\begin{mytikz}
\node (n2) [right=.1cm of g1] {$.~\exists~\big($};
\end{mytikz}
&
\scalebox{.6}{
\begin{mytikz}
\graphsingle (g2) [right=.1cm of n2] {
    \begin{mytikz}  
	\single (c1) at (0,0) {:$C1$};
	\single (m2) at (2.5,0) {:$M2$};   
    \edge (c1) -- node[uselab]{:$in12$}(m2);
 	\end{mytikz}
};
\end{mytikz}}
&
\begin{mytikz}
\node (xor) [right=.1cm of g2]  {$\big)~\mbox{\textsc{xor}}~\exists~\big($};
\end{mytikz}
&
\scalebox{.6}{
\begin{mytikz}
\graphsingle (g2) [right=.1cm of xor] {
    \begin{mytikz}  
	\single (c1) at (0,0) {:$C1$};
			\single (m22) at (2.5,0) {:$M2$}; 
     \edge (c1) -- node[uselab]{:$\overline{in12}$}(m22);
 	\end{mytikz}
};
\end{mytikz}}
&
\begin{mytikz}
\node (n3) [right=.1cm of g2] {$\big)$};
\end{mytikz}\\
 \end{tabular}
  \end{center}
For example a graph satisfies the first invariant if, whenever it  contains node :$C1$, then it contains an outgoing edge $in12$ to node :$M2$ if and only if it does not contain an outgoing edge $\ov{in12_{1}}$ to node :$\ov{M2}_{1}$.

We proceed now describing how the rules of $\mathcal{CG}$ have to be modified exploiting the complemented items in $\ov{TG}$  to encode the constraints in the reachable graphs. We first complement all rules, making them applicable only if the complement of the shape of each constraint is present, which means that the constraint cannot be violated. Next we  enrich the resulting rules to ensure that the invariant is preserved.

\begin{construction}[complemented rules]
\label{def:compProds}
Given a conditional rule $\langle p, \nacs_{p}\rangle$ over $TG$ 
its complementation is the conditional rule $compl(\langle p, \nacs_{p}\rangle)$
over $\ov{TG}$
returned by the following procedure.

\begin{enumerate}
\item Set $\tilde{p} = p$, $\nacs_{\mathit{todo}} = \nacs_{p}$, $\nacs_{\mathit{done}} = \emptyset$.
\item While $\nacs_{\mathit{todo}}$ is not empty, perform the following step:
\begin{itemize}
\item[3. ] Let $n : L \to N \in \nacs_{\mathit{todo}}$ be a constraint for $\tilde{p} = ({L} \lrel{{l}} {K} \to {R})$. We assume that 
$(\dagger)$  the typing over $\ov{TG}$ of the body $N^{-}$ of $n$'s shape 
is $t_{N^{-}};in_{{TG}}$, i.e.~it factorizes through $TG$.
Then 
\begin{enumerate}
\item If there is a DPO diagram $\ov{N} \Rrel{\tilde{p},\ov{n}} H$, where $\ov{n}: L \to \ov{N}$ is the complemented constraint, and there is no arrow $N^{-} \to R$, then set $\tilde{p}$ to be the derived rule (its bottom span). Otherwise, set $\tilde{p} = (\ov{N} \lrel {\ov{n} \circ {l}} {K} \to {R})$.
%
\item Set $\nacs_{\mathit{todo}} = \nacs_{\mathit{todo}} \setminus \{n\}$, and $\nacs_{\mathit{done}} = \nacs_{\mathit{done}} \cup \{n\}$. 
\item Lift each constraint in $\nacs_{\mathit{todo}}$ or $\nacs_{\mathit{done}}$ along 
$\ov{n}$, i.e., replace each constraint $n': L \to N'$ with the morphism obtained by pushing out $n'$  along $\ov{n} : L \to \ov{N}$.
\end{enumerate}
\end{itemize}
\item[4. ]   Return $\langle \tilde{p}, \nacs_{\mathit{done}}\rangle$
\end{enumerate}
\end{construction}

Note that assumption $(\dagger)$ above is satisfied by the constraints of the starting rule $p$ because it is typed over $TG$. Furthermore it is preserved by the modification of constraints in step \emph{3.(c)} because the shape of a constraint does not change when pushing it out along a morphism, as it is defined as an initial pushout. 

Note that Construction~\ref{def:compProds} was phrased exploiting pushouts in category $\mathbf{Graphs}_{\ov{TG}}$. The assumption of safety would allow us to work directly in the category of subgraphs of $\ov{TG}$, using the standard set-theoretical operations of intersection and union of subgraphs (corresponding to pullback over the type graph, and to pushout over the pullback).  We refrain from this and continue exploiting categorical constructions in view of generalizing the theory to the unsafe case.

We enrich now the complemented rules just obtained  in order to guarantee the preservation of the complementation invariant of  Definition~\ref{def:invariant}. Recall that the invariant requires that  whenever the border $L^{-}$ of the shape of a constraint is present, then either its body $N^{-}$ or its body complement $\ov{N}^{-}$ is present as well. For the shapes of constraints of a rule $p$, it is easy to check that such property is preserved, by construction, by the rule $compl(\langle p, \nacs_{p}\rangle)$ returned by Construction~\ref{def:compProds}.  
With the following construction we replace  each rule  with one or more rules (all equivalent as far as the items typed over $TG$ are concerned) which preserve the invariant also for the shapes of constraints of other rules. 

The following  four cases are possible for a rule $p$ and a shape $n^{-}: L^{-} \to N^{-}$ not in $sh(\mathcal{N}_{p})$: (1) If rule $p$ deletes the body $N^{-}$ preserving its border $L^{-}$, then $p$ is enriched  to create the corresponding body complement $\ov{N}^{-}$. (2) Dually, if $p$ creates  the body $N^{-}$ preserving the border $L^{-}$, then it is extended to delete the body complement $\ov{N}^{-}$. (3) If $p$ creates the border $L^{-}$ but not the body $N^{-}$, then it is extended to create the body complement $\ov{N}^{-}$. (4)  Finally, in the case of constraints of shape $E$, the border 
(which is made of two nodes, see Diagram~\eqref{eq:shapes}) can also be created or deleted only in part (just one node). If a rule creates only one node of the border $L^{-}$ (i.e.  the second node is not in $R$), then to preserve the invariant we need two rules: if the other node is already present in the current graph, we need to add the body complement,  but if it is not present we don't need to add anything.


\begin{construction}[making the rules invariant-preserving]
\label{def:rules-inv}
Let $CP_{\mathcal{CG}} = \{compl(\langle p, \nacs(p)\rangle) \mid p \in P\}$ be the set of complemented rules of  grammar $\mathcal{CG}$ 
typed over $\ov{TG}$. Also, for $p \in P$ let $sh(\mathcal{CG}_{\ov{p}})$ be the set of all shapes of constraints of rules different from $p$: as observed above, shapes are not affected by complementation.

 For each conditional rule $\langle p = L \leftarrow K \to R, \nacs_{p}\rangle$ in $CP_{\mathcal{CG}}$  we obtain a set of \emph{invariant preserving} conditional rules $\mathit{IP}(p)$ by applying in sequence the following transformations. 
\begin{enumerate}
\item  \emph{[Compensate body deletion]}\  
%
%
%
For each shape $n^{-}_{i}: L^{-}_{i} \to N^{-}_{i}$ in $sh(\mathcal{CG}_{\ov{p}})$ such that there is an arrow $N^{-}_{i} \to L$, check if in Diagram~\eqref{eq:inv1} (left) the top arrow of pullback  (1) is not an isomorphism (the body $N^{-}_{i}$ is deleted), but the top arrow of pullback (1) + (2) is an iso (the border $L^{-}_{i}$ is preserved).  In this case take the pushout (3) of arrows $L^{-}_{i} \cong Y \to K \to R$ and $Y \to \ov{N}^{-}_{i}$, and set $p' := L \leftarrow K \to R'$. After all shapes are considered, continue to the next step with $\langle p', \nacs_{p}\rangle$.

\begin{equation}
\label{eq:inv1}
\scalebox{.90}{
\xymatrix@C=4ex@R=3ex{
L_{i}^{-} \ar[d] \ar@/_3ex/[dd]
& Y   \ar[l]_{ \cong} \ar[r]  \ar[d] \ar@/^3ex/[dd]
& \ov{N}_{i}^{-} \ar@{-->}[d]  \ar@{}[dl]|{(3)} 
\\ N^{-}_{i} \ar[d] 
& X \ar[l]_{\not \cong} \ar[d] \ar@{}[dl]|{(1)} \ar@{}[ul]|{(2)}
&  R'
\\L 
& K \ar[l] \ar[r]
& R \ar@{-->}[u]
}
\qquad
\xymatrix@C=4ex@R=3ex{
X \ar[r]^{\not \cong} \ar[d]  \ar@{}[dr]|{(4)} 
& L^{-}_{i} \ar[r] \ar[d]  \ar@{}[dr]|{(5)} 
& \ov{N}^{-}_{i} \ar@{-->}[d]
\\ K \ar[r] 
&R \ar@{-->}[r] 
&R'
}
\qquad
\xymatrix@C=4ex@R=3ex{
& Y \ar[r]^{\not \cong} \ar[d]  \ar[r] \ar@{}[dr]|{(6)} 
& X \ar[d] \ar[r]^{\not \cong}  \ar@{}[dr]|{(7)} 
& L^{-}_{i} \ar[r] \ar@{-->}[d]  \ar@{}[dr]|{(8)} \ar@{..>}@/^3ex/[dd]
& \ov{N}^{-}_{i} \ar@{-->}[d]
\\ L \ar[dr]
& K \ar[r] \ar[l] \ar[dr]  \ar@{}[d]|{(9)}
&R \ar@{-->}[r]  \ar@{}[d]|{(10)} \ar@{..>}[dr]
&R' \ar@{-->}[r] 
&R''
\\ & L' 
& K' \ar[l] \ar@{-->}@/^1ex/[urr] 
& \ov{TG}
}
}
\end{equation}

\item \emph{[Compensate body creation]}\  This step can be formalized with a diagram symmetric to the previous one: if $p$ creates the body $N^{-}_{i}$ of a shape and preserves its border $L^{-}_{i}$,   $p$ is enriched in order to delete the body's complement by replacing $L$ with the graph $L'$ obtained as pushout of $L^{-}_{i} \to N^{-}_{i}$ and $L^{-}_{i} \to K \to L$. Also, all constraints have to be lifted along 
$L \to L'$   as in step \emph{(c)} of Construction~\ref{def:compProds}.

\item \emph{[Complete border creation]}\  Continuing with the conditional rule $\langle p, \nacs_{p}\rangle$ resulting from the previous step, for each shape  $n^{-}_{i}: L^{-}_{i} \to N^{-}_{i}$ in $sh(\mathcal{CG}_{\ov{p}})$ consider Diagram~\eqref{eq:inv1} (middle):
if there is an arrow $L^{-}_{i} \to R$ and (4) is a pullback where the top morphism is not an iso (the rule completes the creation of the border), then build the pushout (5) and continue with  $\langle L \leftarrow K \to R',\nacs_{p}\rangle$.
%
%

\item \emph{[Complete partial border creation]}\  
%
%
Let $\mathcal{R} = \{\langle p, \nacs_{p}\rangle\}$ be a set of conditional rules, initialized with  the  rule  resulting from the previous step.
For each  shape $n_{i}^{-}$ of type $E$ in $sh(\mathcal{CG}_{\ov{p}})$, do the following for each rule $p= L \leftarrow K \to R$ in $\mathcal{R}$:
Consider Diagram~\eqref{eq:inv1} (right).  Graph $X$ is the intersection of $R$ and $L^{-}_{i}$, obtained as the pullback of the typing morphisms (because the grammar is safe). If it is not isomorphic to $L^{-}_{i}$ and also the top morphism of pullback (6) is not an iso, then the rule creates part of the border, but not all of it. In this case add to $\mathcal{R}$ rule $\langle L' \leftarrow K' \to R'', \mathcal{N}'\rangle$, obtained as follows: build (7) and (8) as pushouts, (10) as pushout complement, (9) as pushout, while  $\mathcal{N}'$ is obtained by lifting all the constraint of $\mathcal{N}$ along $L \to L'$. 
\end{enumerate}
\end{construction}

Note that the last step could generate for each conditional rule a large set of derived rules needed to cover all the possible situations (presence or absence of nodes of $E$ shapes). 
Whenever $p$ has a match in a graph $G$, only the largest derived rule of the set having a match in $G$ should be applied. This  requirement of maximality of the match will be needed to guarantee correctness.


For example, rule $\ov{jm(C2,M1)}$ of Example~\ref{ex:complementedCS}
is obtained by first complementing rule $jm(C2,M1)$ of Example~\ref{ex:Client-Server},
which adds edge $\ov{in21}$ to
$L$, and then applying  step 2, because the rule creates ${in21}$ which is also the body of the shape of the constraint of rule $pc(C2)$.
In Example~\ref{ex:complementedCS}
for each group of rules generated according to step 4 above 
only the maximal rule is depicted. For example, rule $\ov{sm(S1)}$ also has three subrules, obtained by deleting node $C2$, $C3$ or both from $L$, $K$ and $R$, and the corresponding complemented edges to $M1$ from $R$. Such rules are obtained starting from rule ${sm(S1)}$ of Example~\ref{ex:Client-Server},
which generates a node, $M1$,  belonging to the border of  two shapes.

Exploiting the constructions just described, we can complete the definition of  grammar $E(\mathcal{CG})$.

\begin{definition}[the enriched grammar]
\label{def:encodedGrammar}
Given $\mathcal{CG} = \langle TG, G_{in}, P, \pi, \nacs\rangle$, its enriched grammar is defined as $E(\mathcal{CG}) = \langle \ov{TG}, \inv{G_{in}}, P', \pi', \nacs'\rangle$ where $\ov{TG}$ is as for  Defintion~\ref{def:complementedTG}, $\inv{G_{in}}$ is as for Definition~\ref{def:invariantClosure}, and the conditional rules determined by $P'$, $\pi'$ and $\nacs'$ are obtained from those of   $\mathcal{CG}$ by applying Constructions~\ref{def:compProds} and~\ref{def:rules-inv}. 
\end{definition}

As desired, all the  reachable graphs of grammar $E(\mathcal{CG})$ satisfy the invariant $\Phi_{inv}$.
In fact this is obvious for the start graph by construction, and each of the rules can be shown to preserve the invariant by a detailed analysis of Constructions~\ref{def:compProds} and~\ref{def:rules-inv}, also exploiting the maximality requirement of matches mentioned above.

\begin{fact}[the complementation invariant is satisfied]
\label{pr:invariant}
All the reachable graphs of conditional grammar  $E(\mathcal{CG})$ satisfy the complementation invariant $\Phi_{inv}$.
\end{fact}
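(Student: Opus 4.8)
The plan is to prove the statement by induction on the length of derivations of $E(\mathcal{CG})$, the invariant $\Phi_{inv}$ being a conjunction, over all shapes $n^{-}: L^{-} \to N^{-} \in sh(\mathcal{CG})$, of the requirement that every occurrence of the border $L^{-}$ in the current graph be extended by exactly one of the body $N^{-}$ and the body complement $\ov{N}^{-}$. For the base case I would observe that the start graph $\inv{G_{in}}$ satisfies $\Phi_{inv}$ immediately from Definition~\ref{def:invariantClosure}: the invariant closure adds one copy of the body complement $\ov{N}^{-}$ for each border occurrence in $G_{in}$ whose body is absent, and nothing else, so the \textsc{xor} clause holds at every border occurrence.

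For the inductive step I would assume that a reachable graph $G$ satisfies $\Phi_{inv}$ and that $G \Rrel{p',m} H$ is a transformation whose rule $p'$ arises, via Constructions~\ref{def:compProds} and~\ref{def:rules-inv}, from some conditional rule $\langle p, \nacs(p)\rangle$ of $\mathcal{CG}$. Fixing an arbitrary shape $n^{-}: L^{-} \to N^{-} \in sh(\mathcal{CG})$ and an arbitrary border occurrence $o: L^{-} \to H$, I would show that exactly one of $N^{-}$ and $\ov{N}^{-}$ extends $o$. The argument splits according to whether $n^{-} \in sh(\mathcal{N}_p)$ is a constraint shape of the applied rule itself, or $n^{-} \in sh(\mathcal{CG}_{\ov p})$ belongs to another rule. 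In the first case the claim follows from Construction~\ref{def:compProds}: complementing along $\ov n$ makes $p'$ applicable only when the body complement is present (hence, by the \textsc{xor} holding in $G$, the body is absent), and the derived rule either leaves this situation untouched or, when it creates the body, simultaneously consumes the complement, so the \textsc{xor} is restored in $H$. In the second case I would appeal to the four steps of Construction~\ref{def:rules-inv}, which are designed precisely to re-establish the \textsc{xor} whenever the transformation deletes a body (step~1), creates a body (step~2), creates a whole border afresh (step~3), or creates an $E$-shaped border only in part (step~4): in each situation the rule has been enriched so as to create or delete the matching body complement.

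Throughout both cases I would trace the occurrence $o$ back along the DPO square, distinguishing whether its border, its body and its complement are preserved, deleted or newly created by the step; occurrences untouched by the rule inherit the \textsc{xor} directly from the inductive hypothesis, while the remaining ones are exactly those addressed by the enrichment steps above. The main obstacle I anticipate is the exhaustiveness and bookkeeping of this case analysis, and in particular the $E$-shape case of step~4, where the two-node border may be completed one node at a time. Here correctness rests on the maximality requirement on matches: it guarantees that the largest applicable derived rule is selected, so that the body complement is added exactly when the second border node is already present in $G$, leaving no border occurrence with neither body nor complement and none with both. Verifying that maximality forces the right derived rule in every configuration, and that the overlaps between border, body and complement are treated consistently across all four steps and with Construction~\ref{def:compProds}, is the delicate part of the argument.
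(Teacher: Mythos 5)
Your proposal is correct and follows essentially the same route as the paper, which argues exactly that the start graph satisfies $\Phi_{inv}$ by construction of the invariant closure and that each rule preserves the invariant via a detailed analysis of Constructions~\ref{def:compProds} and~\ref{def:rules-inv}, exploiting the maximality requirement on matches. In fact, your induction over derivation length with the case split between the applied rule's own constraint shapes and those of other rules is a more explicit rendering of the argument the paper only sketches.
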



\section{Equivalence of enriched and original grammars}
\label{sec:equivalenceOfGrammars}

All along this section let $\mathcal{CG} = \langle TG, G_{in}, P, \pi, \nacs\rangle$ be a grammar and $E(\mathcal{CG}) = \langle \ov{TG}, \ov{G}_{in}, P', \pi', \nacs'\rangle$ be the corresponding enriched grammar of Definition~\ref{def:encodedGrammar}.
We show  that there is a grammar morphism from  $E(\mathcal{CG})$ to   $\mathcal{CG}$. Interestingly, this morphism not only preserves but also reflects derivations and, we conjecture, sequential independence. 

\begin{proposition}
\label{pr:grammarMorphism}
Let $e_{TG}: (\ov{TG} \lrel {in_{TG}}  TG \rrel {id} TG)$ be a span of type graphs, and $e_{P}: P' \to P$ be the mapping that associates each rule obtained from Construction~\ref{def:rules-inv}
with 
the original rule in $P$. Then $e = \langle e_{TG}, e_P \rangle: E(\mathcal{CG}) \to \mathcal{CG}$ is a well-defined morphism. 
\end{proposition}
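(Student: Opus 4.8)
The plan is to verify that the pair $e = \langle e_{TG}, e_P\rangle$ satisfies the three conditions of Definition~\ref{def:cond-morphisms}, taking $\mathcal{CG}_1 = E(\mathcal{CG})$ as source and $\mathcal{CG}_2 = \mathcal{CG}$ as target. First I observe that the span $e_{TG} = (\ov{TG} \lrel{in_{TG}} TG \rrel{id} TG)$ has its required left leg $f_< = in_{TG}$ mono, since $in_{TG}$ is the colimit inclusion of $TG$ into the enriched type graph. Consequently the retyping functor is $e_{TG}^{<>} = id^> \circ in_{TG}^< = in_{TG}^<$, the pullback along $in_{TG}$, which intuitively forgets all complemented items (those typed over $\ov{TG}\setminus TG$) and keeps the part typed over $TG$. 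Finally $e_P$ is well defined because each rule produced by Constructions~\ref{def:compProds} and~\ref{def:rules-inv} is generated by processing a single original rule of $P$, to which it is then sent.

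Condition~1 (preservation of the input graph) is immediate: since $\ov{G}_{in} = \inv{G_{in}}$ and, as already observed, $in_{TG}^<(\inv{G}) \cong G$ for every $G$ typed over $TG$, we get $e_{TG}^{<>}(\ov{G}_{in}) \cong G_{in}$ (working, as usual in this setting, up to isomorphism). For Condition~2 (rules are mapped to ancestor rules) I would argue that, for every enriched rule $\ov{p} \in P'$ with $e_P(\ov{p}) = p$, the forgetful image $e_{TG}^{<>}(\ov{p})$ has $\pi(p)$ as an ancestor. The key observation is that Constructions~\ref{def:compProds} and~\ref{def:rules-inv} only ever \emph{add} items to $\pi(p)$: complemented items (which $e_{TG}^{<>}$ forgets) and, in the partial-border case of Construction~\ref{def:rules-inv}, some additional context typed over $TG$. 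Since $in_{TG}^<$ is a pullback functor it preserves the pushouts of the DPO diagrams used in the constructions (as noted in the proof of Proposition~\ref{pr:CGGmorphisms}, using adhesivity of $\mathbf{Graphs}$); applying it to those diagrams shows that $e_{TG}^{<>}(\ov{p})$ coincides with $\pi(p)$ up to added $TG$-typed context, so $\pi(p)$ is an ancestor via a DPO as in~\eqref{dia:morphism-prod} with mono verticals.

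For Condition~3 (constraints are reflected) I would, given an original NAC $n': L_p \to N' \in \nacs(p)$ and a mono $h: L_{\ov{p}} \to N$ over $\ov{TG}$ making square~\eqref{dia:nac} a pushout, choose as witnessing source constraint the lifted copy $\ov{n} \in \nacs'(\ov{p})$ of $n'$ produced in step~\emph{(c)} of Construction~\ref{def:compProds} (and further lifted in Construction~\ref{def:rules-inv}). Lifting is defined by pushing a constraint out along monos, which leaves its shape (its negative items, typed over $TG$) unchanged, so the body of $\ov{n}$ is $L_{\ov{p}}$ with exactly the negative items of $n'$ glued on. The pushout hypothesis forces $e_{TG}^{<>}(N)$ to contain those same negative items, glued along $L_p$, so they appear in the $TG$-typed part of $N$. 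Combining $h$ on $L_{\ov{p}}$ with the inclusion of these negative items into $N$ yields a mono $w$ from the body of $\ov{n}$ into $N$ with $w \circ \ov{n} = h$; this gives $\ov{n} \models h$, since if $m \not\models h$ via a mono $q: N \to G$ then $q \circ w$ witnesses $m \not\models \ov{n}$, i.e. $m \models \ov{n}$ implies $m \models h$.

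The main obstacle is Condition~3: it requires tracking how lifting transports the original NACs into $\nacs'(\ov{p})$ and then producing the comparison mono $w$ over $\ov{TG}$ out of the pushout square, which is stated only after forgetting complemented items — one must check that the negative items of $n'$ genuinely reappear inside $N$ in a position compatible with $h$. Condition~2 is conceptually routine but requires a careful, case-by-case inspection of the four transformations of Construction~\ref{def:rules-inv} (and of the two branches of step~3 of Construction~\ref{def:compProds}) to confirm that none of them adds or removes $TG$-typed items in a way that would destroy the ancestor relationship.
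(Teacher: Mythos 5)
Your proposal is correct and follows essentially the same route as the paper's proof sketch: it verifies the three conditions of Definition~\ref{def:cond-morphisms}, resting condition~2 on the observation that Constructions~\ref{def:compProds} and~\ref{def:rules-inv} only add complemented items (plus, in the partial-border case of step~4, $TG$-typed nodes that are preserved by the rule), and condition~3 on the fact that the NACs of $E(\mathcal{CG})$ are precisely the lifted copies of the original NACs along $L_{e_P(p)} \to L_p$. Your explicit construction of the mono $w$ and the resulting subsumption $\ov{n} \models h$ fills in detail that the paper leaves implicit, but the underlying argument is the same.
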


\begin{proof}[Proof sketch]
Note that the pullback functor $in_{TG}^{<}$ deletes from a graph all the complemented items, which are typed over $\ov{TG}$ but not over $TG$. The proof proceeds by analyzing all the transformations of the previous section showing that only complemented items are ever added (to the left- and right-hand sides of rules, to constraints and to the start graph). Only the transformation for shape $E$ in step 4 of Construction~\ref{def:rules-inv} may add new non-complemented nodes to a rule $p$. Such nodes are preserved by $p$, and become isolated after retyping along $in_{TG}$. Therefore the retyped rule $e_{TG}^{<>}(p)$ is derived from the target rule $e_{P}(p)$, as desired.\\
Furthermore the NACs are reflected by $e$: this follows from the observation that the NACs of a rule $p$ of $E(\mathcal{CG})$ can be  obtained by lifting all and only the NACs of the original rule $e_{P}(p)$ along the morphism embedding the left-hand side $L_{e_{P}(p)}$ in $L_{p}$. 
\end{proof}

The next result essentially shows that the encoding of NACs is correct: the NACs in  
$E(\mathcal{CG})$ are redundant as they can never be violated by a match to a graph satisfying the invariant.

\begin{lemma}[extended NACs are redundant]
\label{pr:redundantNACs}
Let $p \in P'$ be a rule of $E(\mathcal{CG})$,  $G \in \mathbf{Graph}_{\ov{TG}}$ be a graph such that $G \models \Phi_{inv}$, and  $m: L_{p} \to G$ be a match. Then  $m \models \nacs'(p)$.
\end{lemma}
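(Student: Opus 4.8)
The plan is to fix an arbitrary constraint $n' \in \nacs'(p)$ and prove $m \models n'$; since $n'$ is arbitrary this yields $m \models \nacs'(p)$. I would argue by contradiction: assume there is a mono $q: N' \to G$ with $q \circ n' = m$, and derive a violation of the complementation invariant $\Phi_{inv}$, which $G$ satisfies by hypothesis.

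First I would record two structural facts about the complemented rule $p$, both read off from Constructions~\ref{def:compProds} and~\ref{def:rules-inv}. Let $n^{-}: L^{-} \to N^{-}$ be the shape of $n'$. Because lifting a constraint along a mono leaves its shape unchanged (the shape is an initial pushout), this coincides with the shape of the original constraint of $e_P(p)$ that $n'$ arises from, and so lies in $sh(\mathcal{CG})$. \textbf{(F1)} The left-hand side $L_p$ contains the body complement $\ov{N}^{-}$ attached along the border $L^{-}$: the relevant iteration of step 3(a) of Construction~\ref{def:compProds} replaces the current left-hand side by $\ov{N}$ (in the ``otherwise'' branch directly, and in the derived-rule branch as the source of the bottom span of $\ov{N} \Rrel{\tilde{p},\ov{n}} H$), and $\ov{N}$ glues $\ov{N}^{-}$ to the previous left-hand side along $L^{-}$. \textbf{(F2)} The codomain $N'$ of $n'$ contains the forbidden body $N^{-}$, attached to $L_p$ along the same border $L^{-}$; this is just what it means for $n'$ to have shape $n^{-}$, and it is preserved by every pushout (lifting) producing $n'$.

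The remaining argument is then immediate. The inclusion $L^{-} \to L_p$ composed with $m$ gives an occurrence of the border in $G$. By \textbf{(F1)} this occurrence already extends, via $m$, to an occurrence of the complement body $\ov{N}^{-}$. By the assumed violation $q$ and \textbf{(F2)}, it also extends, via $q$, to an occurrence of the forbidden body $N^{-}$, since $q \circ n'$ and $m$ agree on $L_p$ and hence on the border. Thus in $G$ one and the same occurrence of $L^{-}$ extends to both $N^{-}$ and $\ov{N}^{-}$, contradicting the \textsc{xor} clause of $\Phi_{inv}$ for the shape $n^{-} \in sh(\mathcal{CG})$. Hence no violating $q$ exists and $m \models n'$.

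The logical core above is short once the invariant is available; I expect the main obstacle to be establishing \textbf{(F1)} and \textbf{(F2)} rigorously, i.e. verifying that for every constraint surviving into $\nacs'(p)$ the associated body complement is genuinely present in $L_p$ and hinged on the correct border. This means checking that neither the processing of the remaining constraints in Construction~\ref{def:compProds} nor any of the four enrichment cases of Construction~\ref{def:rules-inv} (notably the case analysis for shapes of type $E$, where a rule may be split into several) ever discards a body complement belonging to a constraint of $p$ or detaches it from its border. In the safe setting this bookkeeping is much easier, since all graphs are (up to the injective typing) subgraphs of $\ov{TG}$ and both $m$ and $q$ are forced to be subgraph inclusions, so that ``$N^{-}$ present'' and ``$\ov{N}^{-}$ present'' become plain assertions about subgraphs of $G$ that the invariant directly contradicts.
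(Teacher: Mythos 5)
Your proposal is correct and follows essentially the same route as the paper's proof: the paper also observes that after the lifting in Construction~\ref{def:compProds} the codomain of each constraint contains both the body $N^{-}$ and its complement $\ov{N}^{-}$ over the same border (your \textbf{(F1)}+\textbf{(F2)}), that the transformations of Construction~\ref{def:rules-inv} can only extend this, and that consequently no mono from it into a graph satisfying $\Phi_{inv}$ can exist. If anything, your explicit use of the \textsc{xor} clause on the shared border occurrence is a slightly more careful rendering of the paper's blanket claim that monos cannot go from invariant-violating to invariant-satisfying graphs (which holds here precisely because the violation is of the ``both present'' kind).
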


\begin{proof}[Proof sketch] By Construction~\ref{def:compProds}, each constraint $n: L_{p} \to N$ is lifted  along $L_{p} \to \ov{N}$ obtaining an extended constraint $\ov{N} \to \tilde{N}$. Therefore $\tilde{N}$  contains both the body $N^{-}$ of the shape $n^{-}$ and its complement $\ov{N}^{-}$, violating the invariant $\Phi_{inv}$, and this fact cannot be changed by other possible transformations of the constraint in Construction~\ref{def:rules-inv}, because they can only extend it further.   
We can conclude that $G$ cannot violate the constraint because there cannot be an injective morphism from a graph that doesn't satisfy $\Phi_{inv}$ to a graph that satisfies it, and
  $G$ satisfies the invariant by assumption. 
\end{proof}

We now show that, as desired, morphism $e$ also reflect derivations.
\begin{theorem}[reflection of derivations]
\label{pr:reflection}
The conditional grammar morphism $e: E(\mathcal{CG}) \to \mathcal{CG}$ reflects derivations: if  $G \Rrel {p,m} H$ in $\mathcal{CG}$, then there are graphs $G'$ and $H'$ over $\ov{TG}$, a rule $p' \in P'$ and a match $m': L_{p'} \to G'$ such that $G' \Rrel {p',m'} H'$   in   $E(\mathcal{CG})$, $e_{TG}^{<>}(G') \cong G$, $e_{TG}^{<>}(H') \cong H$, and
$e_{p}(p') = p$.
\end{theorem}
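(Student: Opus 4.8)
The plan is to reflect the transformation along $e$ by taking as $G'$ the canonical enrichment of $G$, namely its invariant closure $\inv{G}$. This choice is convenient because $e_{TG}^{<>}(\inv{G}) = in_{TG}^{<}(\inv{G}) \cong G$ (recalled after Definition~\ref{def:invariantClosure}; the $id^{>}$ component acts as the identity on $\mathbf{Graphs}_{TG}$) and, by construction, $\inv{G} \models \Phi_{inv}$. The heart of the argument is then to lift the match $m\colon L_p \to G$ to a match $m'\colon L_{p'} \to \inv{G}$ of a suitable invariant-preserving complemented rule $p'$ with $e_P(p') = p$, to build a DPO transformation $\inv{G} \Rrel{p',m'} H'$ in $E(\mathcal{CG})$, and finally to read back $e_{TG}^{<>}(H') \cong H$ by invoking the preservation result already available (Proposition~\ref{pr:CGGmorphisms}).

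For the match extension I would argue as follows. The left-hand side $L_{p'}$ extends $L_p$ only by complemented items and, in the $E$-shape case of step~4 of Construction~\ref{def:rules-inv}, by isolated non-complemented border nodes. I first select, among the rules of $\mathit{IP}(p)$, the maximal one whose extra $E$-border nodes are present at $m$ in $\inv{G}$, as required by the maximality-of-match condition stated after Construction~\ref{def:rules-inv}. It then remains to place the complemented items. For each constraint $n$ of $p$ the hypothesis $m \models \nacs(p)$ says that the body $N^{-}$ of its shape is absent at the border occurrence $L^{-} \to L_p \to G$ induced by $m$; hence, by Definition~\ref{def:invariantClosure}, $\inv{G}$ contains the complement $\ov{N}^{-}$ exactly at that occurrence, which is precisely what the complemented left-hand side $\ov{N}$ of Construction~\ref{def:compProds} demands. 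The same reasoning covers the complements added by steps 2--4 of Construction~\ref{def:rules-inv} for the shapes of other rules: in each relevant case the corresponding body is freshly created (hence absent beforehand, with its border preserved) or an $E$-border is completed, so the needed complement is supplied by $\inv{G}$. Since the grammar is safe, all graphs may be regarded as subgraphs of $\ov{TG}$ and matches as inclusions, so these choices assemble into a single well-defined mono $m'\colon L_{p'} \to \inv{G}$ extending $m$.

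With $m'$ in hand, satisfaction of the NACs is immediate: because $\inv{G} \models \Phi_{inv}$, Lemma~\ref{pr:redundantNACs} gives $m' \models \nacs'(p')$. For the DPO I would check the gluing condition for $K_{p'} \to L_{p'} \xrightarrow{m'} \inv{G}$; using safety this reduces to the gluing condition for $m$ in $G$, which holds since $G \Rrel{p,m} H$ exists, together with the observation that the only additional deletions of $p'$ are complemented items local to the shapes (a complement edge, or a complement node carrying only its complement edge, as forced by the structure of $\ov{TG}$), so their removal creates no dangling edges. Let $H'$ be the resulting graph, yielding a conditional transformation $\inv{G} \Rrel{p',m'} H'$ in $E(\mathcal{CG})$. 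Applying Proposition~\ref{pr:CGGmorphisms} to this single step produces a transformation $e_{TG}^{<>}(\inv{G}) \Rrel{p,\; e_{TG}^{<>}(m')\circ i_{p'}} e_{TG}^{<>}(H')$ with rule $e_P(p') = p$, where $i_{p'}\colon L_p \to e_{TG}^{<>}(L_{p'})$ is the embedding from Diagram~\eqref{dia:morphism-prod}. Since $m'$ extends $m$, the projected match coincides with $m$ under the iso $e_{TG}^{<>}(\inv{G}) \cong G$; as $\mathbf{Graphs}$ is adhesive and both $m$ and the rule span are monic, pushout complements and pushouts are unique up to iso, whence $e_{TG}^{<>}(H') \cong H$.

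I expect the main obstacle to be the match-extension step together with the gluing condition, that is, verifying case by case that the complemented and (for $E$-shapes) isolated context required by the various branches of Constructions~\ref{def:compProds} and~\ref{def:rules-inv} is exactly what the invariant closure provides, and that consuming complemented items never obstructs the pushout complement. By contrast, the identity $e_{TG}^{<>}(H') \cong H$ is essentially routine once Proposition~\ref{pr:CGGmorphisms} and uniqueness of pushout complements are invoked, since $e_{TG}^{<>}$ discards precisely the complemented items manipulated by $p'$ and leaves the $TG$-typed part of the step unchanged, modulo the isolated nodes, which lie already in both $G$ and $H$.
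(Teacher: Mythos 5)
Your proposal follows essentially the same route as the paper's own proof sketch: take $G' = \inv{G}$, lift $m$ to a match $m'$ of the complemented rule using the fact that satisfied NACs mean absent bodies and hence (by the invariant closure) present complements, choose the maximal representative from the family produced by step~4 of Construction~\ref{def:rules-inv}, discharge the NACs of $E(\mathcal{CG})$ via Lemma~\ref{pr:redundantNACs}, and recover $e_{TG}^{<>}(H') \cong H$ from the fact that $e$ preserves derivations. Your treatment is in fact somewhat more explicit than the paper's sketch on two points it leaves implicit — the gluing condition for the extra complemented deletions and the uniqueness-of-pushout-complement argument behind the final isomorphism — but these are elaborations of, not deviations from, the same argument.
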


\begin{proof}[Proof sketch]  Let $G' = \inv{G}$. Since $G \Rrel {p,m} H$ in $\mathcal{CG}$, it is possible to show that the rule obtained by complementing $\langle p, \nacs(p)\rangle$ according to Construction~\ref{def:compProds}
has a match $m'$ in $G$. In fact, all the constraints of $p$ are satisfied by $m$ by assumption, meaning that the corresponding shape bodies are not present in $\inv{G}$, and thus by the invariant their complement bodies are present in $\inv{G}$. The complemented rule could have been transformed further by Construction~\ref{def:rules-inv}: a case analysis shows that its left-hand side may only be extended, and in such case the match  $m'$ can be extended as well because the additional structure must be present in $\inv{G}$ by the invariant. Finally if $p$ has a corresponding family of enriched rules by step 4 of Construction~\ref{def:rules-inv} then the representative having the largest match in $\inv{G}$ has to be chosen. If $p'$ and $m'$ are the rule and match found via this procedure, then $\inv{G} \Rrel{p',m'} H'$ because (a) the applicability to the $TG$-typed part of $\inv{G}$ is given by assumption; (b) the addition of complemented items in Constructions~\ref{def:compProds} and~\ref{def:rules-inv} does not introduce new constraints, and (c) the NACs of  $E(\mathcal{CG})$  are redundant by Lemma~\ref{pr:redundantNACs}. Finally, $e_{TG}^{<>}(\inv{G}) \cong G$ by the properties of $\inv{\_}$, $e_{p}(p') = p$ by construction, and 
$e_{TG}^{<>}(H') \cong H$ because $e$ is a grammar morphism.
\end{proof}

We conjecture that the two grammars are equivalent  in an even stronger sense, because sequential independence is not only preserved by morphism $e$, but also reflected. This will be a topic of future investigation. 
To complete the picture sketched in the introduction,  we formalize the idea of ``dropping the redundant NACs'' from $E(\mathcal{CG})$ with a grammar morphism, useful to guarantee that the resulting grammar is equivalent in the strong sense described above.

\begin{proposition}[forgetting the NACs preserves and reflects derivations and independence]
\label{pr:reflect-again}
Let $DE(\mathcal{CG})$ be the conditional grammar obtained from $E(\mathcal{CG})$ by deleting all NACs, i.e., such that $\nacs(p) = \emptyset$ for each rule $p$. Let $d: E(\mathcal{CG}) \to DE(\mathcal{CG})$ be defined as $d = \langle(\ov{TG} \lrel {id} \ov{TG} \rrel {id} \ov{TG}),  id_{P'}\rangle$. Then $d$ is a well-defined morphism. Furthermore it reflects derivations and sequential independence. 
\end{proposition}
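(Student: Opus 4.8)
The plan is to establish the three assertions separately --- that $d$ is a well-defined CGG morphism, that it reflects derivations, and that it reflects sequential independence --- after which \emph{preservation} of both comes for free from Proposition~\ref{pr:CGGmorphisms} and its corollary, since every CGG morphism preserves derivations and independence. For well-definedness I would note that $d_{TG}$ is the identity span, so $d_{TG}^{<>} = id^{>}\circ id^{<}$ is (up to iso) the identity functor on $\mathbf{Graphs}_{\ov{TG}}$. Condition~1 of Definition~\ref{def:cond-morphisms} then holds because $E(\mathcal{CG})$ and $DE(\mathcal{CG})$ share the start graph $\inv{G_{in}}$; condition~2 holds because $d_P$ is the identity and the two grammars share the same rule spans, so each rule is its own ancestor via identity vertical monos; and condition~3 holds vacuously, because $DE(\mathcal{CG})$ has no constraints and so there is no target NAC to reflect.

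The core of the proof is reflection of derivations, and it turns on the fact that, since $d$ is the identity on types and on rule spans, a derivation of $DE(\mathcal{CG})$ is \emph{literally} a sequence of the same DPO steps over the same graphs as in $E(\mathcal{CG})$, the only difference being that $DE(\mathcal{CG})$ omits the NAC check. I would first argue that every reachable graph of $DE(\mathcal{CG})$ satisfies the complementation invariant $\Phi_{inv}$: the preservation argument underlying Fact~\ref{pr:invariant} is purely structural, using only the rule spans produced by Constructions~\ref{def:compProds} and~\ref{def:rules-inv} together with the maximality-of-match discipline, and hence does not depend on the NACs. Reflection then follows by induction on the length of a derivation $\inv{G_{in}} = G_0 \Rrel{p_1} \cdots \Rrel{p_n} G_n$ of $DE(\mathcal{CG})$: at each step $G_{i-1}\Rrel{p_i,m_i}G_i$ the source $G_{i-1}$ is reachable, hence satisfies $\Phi_{inv}$, so by Lemma~\ref{pr:redundantNACs} the match $m_i$ automatically satisfies $\nacs'(p_i)$, making the identical DPO diagram a valid \emph{conditional} transformation in $E(\mathcal{CG})$; since $d_{TG}^{<>}$ is the identity, the reflected graphs and rule are exactly $G_i$ and $p_i$.

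For reflection of sequential independence, recall that for two consecutive conditional steps independence amounts to (1) the existence of the morphisms $i$ and $j$ of Diagram~\eqref{eq:seq-independence} with the stated commutativities, plus (2) satisfaction of the relevant NACs by the switched matches. Condition (1) concerns only the underlying DPO transformations, which are identical in the two grammars because $d$ is the identity, so it transfers verbatim from $DE(\mathcal{CG})$ to $E(\mathcal{CG})$. Condition (2) is where the grammars differ, but it is again discharged by Lemma~\ref{pr:redundantNACs}: the switched match $g_1\circ j$ lands in the reachable graph $G$, and the induced match of $p_1$ lands in the graph reached by applying $p_2$ at $g_1\circ j$, both of which satisfy $\Phi_{inv}$; redundancy of the NACs then makes the required satisfaction automatic. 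This is exactly the sense in which, as anticipated in the Introduction, independence is ``trivially'' reflected by $d$.

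The only genuinely non-formal point --- and hence the main obstacle --- is the claim that the reachable graphs of $DE(\mathcal{CG})$ satisfy $\Phi_{inv}$ in the absence of the NACs. Discharging it requires checking that the invariant-preservation analysis behind Fact~\ref{pr:invariant} nowhere uses NAC satisfaction but only the shapes of the rule spans and the maximality requirement on matches of the step-4 rule families, and in particular confirming that this maximality discipline is a genuine part of the rule-application mechanism inherited by $DE(\mathcal{CG})$ rather than something covertly encoded in the NACs that $d$ discards --- which it is not, since by the proof of Proposition~\ref{pr:grammarMorphism} the NACs of $E(\mathcal{CG})$ are merely the lifted original NACs.
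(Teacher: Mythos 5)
Your proposal is correct and follows essentially the same route as the paper's proof sketch: conditions 1 and 2 of Definition~\ref{def:cond-morphisms} hold trivially and condition 3 vacuously; reflection of derivations reduces via Lemma~\ref{pr:redundantNACs} to the redundancy of the NACs on invariant-satisfying graphs; and reflection of sequential independence splits into the structural conditions (immediate since the retyping span is an identity/iso) and the NAC conditions (again discharged by redundancy). Your treatment is in fact more explicit than the paper's on the one delicate point --- propagating $\Phi_{inv}$ along derivations of $DE(\mathcal{CG})$, where the paper silently relies on Fact~\ref{pr:invariant} and the maximality-of-match discipline surviving the removal of the NACs --- but this is an elaboration of the same argument, not a different one.
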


\begin{proof}[Proof sketch] Conditions  1 and 2 of grammar morphism hold trivially, and condition 3 holds vacuously because $DE(\mathcal{CG})$ has no NACs. Reflection of derivations holds for the double-pushout part trivially, and for the satisfaction of NACs because they are redundant in $E(\mathcal{CG})$. Reflection of sequential independence holds for the existence of certain morphims and commutativity requirements because retyping is along an iso, and for the additional conditions involving NACs again because they cannot be violated by graphs satisfying the invariant. 
\end{proof}

%

\section{Encoding of NACs in Unsafe Attributed Graph Grammars}
\label{sec:AttributedEncoding}
Many complex models require a formalism more expressive than safe grammars over typed graphs. In particular, safety limits instance graphs to be subgraphs of the type graph and hence, assuming this is finite, only allows us to model systems with a finite state space. Real-world applications also usually require that structural features, conveniently represented by nodes and edges in a graph, are augmented by data. This combination of structure and data leads us to attributed graph grammars.

In this section we will discuss the problem of encoding a conditional attributed graph grammar into an attributed graph grammar without NACs. The constructions proposed are applicable to all mainstream notions of attributed graph transformation, in particular those based on E-graphs~\cite{DBLP:series/eatcs/EhrigEPT06} and ground symbolic attributed graphs~\cite{LL10}. 

As in the non-attributed case we work with typed graphs over a given type graph declaring, apart from node and edge types, also the attributes for nodes and edges and their respective domains given by the sorts of a given data algebra. The data algebra is fixed across all grammars, representing predefined basic data types such as strings and numbers.

Attributed graph transformation is defined based on the transformation of the underlying graph structure with the possibility of attribute constraints restricting possible matches and attribute assignments determining the update of attribute values. In the following subsections we consider an example, describe the encoding for attributed condition grammars first in general and then apply it to the example.

\subsection{Case Study: Token Curated Registry}

In this section we show how our encoding  applies to an unsafe grammar using a model of a Token-Curated Registry (TCR)  based on \cite{tcr1}.
TCR is an architectural pattern for smart contracts where a set of curators ensure the quality of the entries in a list of services or products.  %
A typical scenario starts with a candidate applying to a registry to be listed, if they are not on the list yet. A curator of the registry can challenge a candidate if they are not being challenged at the time. Every curator of the registry hosting a challenge can vote provided that they are not the challenger and have not already voted. 

For simplicity, we assume there is a single registry, and we have majority voting. That means, for the challenge $ch$ to succeed, the number of voters in support of the challenge must be larger than half of the number of potential voters (all curators of the registry except the challenger), that is  $maj(r, ch) := 1~~ if~~ \left(ch.noVotes > \frac{r.noCurs-1}{2}\right)~~ else~~ 0$. Note that 
by this definition in the event of a tie the challenged candidate is considered the winner.  


Curators who supported the majority position receive a reward. In the start graph $G_0$, all reward attributes are initialised with zero. For all objects, the $rwds$ attribute value must be non-negative at all times. We left this condition implicit here but it can easily be asserted for every rule that subtracts from $rwds$ of an object.


\begin{center}
\scalebox{.7}{ 
\begin{tabular}{c@{\hspace{3cm}} c}
\begin{mytikz}
\graphsingle (g1) {
\graphname{type graph $TG$:}
    \begin{mytikz}  
\usedouble (reg) at (0,0) {$Registry$\n noCurs:Int};
\usedouble (cand) at (-4.5,0) {$Candidate$\n rwds:Int};
 \usedouble (cur) at (0,-2.5) {$Curator$\n rwds:Int};
 \usedouble (ch) at (-4.5,-2.5) {$Challenge$\n noVotes:Int\\noRwds:Int};
 \useedge (cand) edge[bend left=-0]  node[uselab]{$on$}(reg);
  \useedge (cur) edge[bend left=-0]  node[uselab]{$of$}(reg);
 \useedge (cur) edge[bend left=-0]  node[uselab]{$voteYay$} (ch);
  \useedge (ch) edge[bend left=-0]  node[uselab]{$ch'ed$} (cand);
    \useedge (ch) edge[bend left=-20]  node[uselab]{$ch'er$} (cur);
    \useedge (ch) edge[bend left=20] node[uselab]{$rwd$} (cur);
		\end{mytikz}
};
\end{mytikz}
&
 \begin{mytikz}
\graphsingle (sg) {
\graphname{start graph $G_{0}$:}
    \begin{mytikz} 	
	\double (reg) at (0,-1.25) {r:$Registry$\n noCurs = 4};  
	\single (cand) at (0,0) {c:$Candidate$};  
	\single (cur1) at (4,0) {t1:$Curator$};  
	\single (cur2) at (4,-.75) {t2:$Curator$};  
	\single (cur3) at (4,-1.5) {t3:$Curator$};  
	\single (cur4) at (4,-2.25) {t4:$Curator$};  
	\edge (cur1) -- node[uselab]{:$of$}(reg);
	\edge (cur2) -- node[uselab]{:$of$}(reg);
	\edge (cur3) -- node[uselab]{:$of$}(reg);
	\edge (cur4) -- node[uselab]{:$of$}(reg);

 	\end{mytikz}
};
\end{mytikz}\\
\end{tabular}}
\end{center}
\vspace{.1cm}
%

In a challenge, either the challenger (curator) wins or the challenged (candidate). In the first case the candidate is dropped from the list whereas in the second case it stays on. In any case the winner and curators who supported the winning party get a reward. 
In this example all constraints are incremental, thus if in a rule there is more than one red edge, each of them has to be considered as a distinct constraint.  
%

\begin{center}
\scalebox{.75}{
\begin{tabular}{l @{\hspace{1cm}}l@{\hspace{1cm}} l}
\begin{mytikz}
\graphsingle (g1) {
\graphname{rule $apply$:}
    \begin{mytikz}  
	\double (cand) at (0,-2.5) {:$Candidate$};
	\double (reg) at (0,0) {r:$Registry$};   
    \begin{naconly}
    \edge (cand) -- node[uselab, text=red]{:$on$}(reg);
    \end{naconly}
    \begin{rhsonly}
    \edge (cand) edge[bend left=30] node[uselab, text=green!70!black]{:$on$}(reg);
    \end{rhsonly}
 	\end{mytikz}
};
\end{mytikz}
 &
\begin{mytikz}
\graphsingle (g1) {
\graphname{rule $challenge$:}
    \begin{mytikz}  
	\double (cur) at (4,0) {:$Curator$};
	\double (reg) at (0,0) {r:$Registry$};   
	\single (cand) at (0,-3) {:$Candidate$};
	\edge (cand) edge[bend left=0,looseness=1] node[uselab]{:$on$}(reg);
	\edge (cur) -- node[uselab]{:$of$}(reg);
    \begin{naconly}
    \double (ch1) at (2,-1.5){:$Challenge$};
    \edge (ch1) -- node[uselab,text=red]{:$ch'ed$} (cand);
    \end{naconly}
    \begin{rhsonly}
    \double (ch) at (4,-3) {:$Challenge$\n noVoters = 0\\ noRwds = 0};
    \edge (ch) -- node[uselab,text=green!70!black]{:$ch'ed$} (cand);
    \edge (ch) -- node[uselab,text=green!70!black]{:$ch'er$} (cur);
    \end{rhsonly}
 	\end{mytikz}
};
\end{mytikz}
& 
\begin{mytikz}
\graphsingle (g1) {
\graphname{rule $voteYay$:}
    \begin{mytikz}  
	\single (cur) at (0,3) {:$Curator$};
	\single (reg) at (-4,3) {:$Registry$};   
	\single (cand) at (-4,0) {:$Candidate$};
    \double (ch) at (0,0) {:$Challenge$\n $noVotes \to noVotes + 1$};
    \edge (ch) -- node[uselab]{:$ch'ed$} (cand);
	\edge (cand) -- node[uselab]{:$on$}(reg);
	\edge (cur) -- node[uselab]{:$of$}(reg);
    \begin{naconly}
	\edge (ch) -- node[uselab,text=red]{:$ch'er$} (cur);
	\path[->,red] (cur) edge[bend left = -40, looseness=1.5] node[uselab,text=red,left]{:$voteYay$} (ch);
    \end{naconly}
    \begin{rhsonly}
	\edge (cur) edge[bend left = 40, looseness=1.5] node[uselab,text=green!70!black,right]{:$voteYay$} (ch);
    \end{rhsonly}
 	\end{mytikz}
};
\end{mytikz}
\end{tabular}}
\end{center}  
\spv

\begin{center}
\scalebox{.7}{ 
\begin{tabular}{ll}
\begin{mytikz}
\graphsingle (g1) {
\graphname{rule $rewardCh'er$:}
    \begin{mytikz}  
	\double (cur1) at (0,0) {:$Curator$\n $rwds \to rwds + 1$};
	\double (reg) at (4,0) {r:$Registry$};
	\double (cand) at (4,-1.5) {:$Candidate$};
    \double (ch) at (0,-1.5) {ch:$Challenge$};
    \edge (cur1) -- node[uselab]{:$of$}(reg);
   	\edge (ch) -- node[uselab]{:$ch'er$} (cur1);
    \begin{lhsonly}
    \edge (ch) -- node[uselab,text=blue]{:$ch'ed$} (cand);
   	\edge (cand) -- node[uselab,text=blue]{:$on$}(reg);
	\end{lhsonly}
 	\end{mytikz} 	
};
\node(n1)[below=.1cm of g1]{$maj(r, ch) == 1$};
\end{mytikz}
&
\begin{mytikz}
\graphsingle (g1) {
\graphname{rule $rewardCh'ed$:}
    \begin{mytikz}  
	\double (cur1) at (0,0) {:$Curator$};
	\double (reg) at (4,0) {r:$Registry$};
	\double (cand) at (4,-1.5) {:$Candidate$\n $rwds \to rwds + 1$};
    \double (ch) at (0,-1.5) {ch:$Challenge$};
    \edge (cur1) -- node[uselab]{:$of$}(reg);
   	\edge (cand) -- node[uselab]{:$on$}(reg);
   	\edge (ch) -- node[uselab]{:$ch'er$} (cur1);
    \begin{lhsonly}
    \edge (ch) -- node[uselab,text=blue]{:$ch'ed$} (cand);
	\end{lhsonly}
 	\end{mytikz}
};
\node(n1)[below=.1cm of g1]{$maj(r, ch) == 0$};
\end{mytikz}\\
\end{tabular}
}
\end{center}

The $vote$ rule is disabled once either the challenger or the challenged candidate is rewarded by the above rules since the link from the challenged candidate to the challenge is deleted at this stage. By deleting this link, the rules for rewarding the challenger or challenged also enable the rewards for curators who voted in favour or against the proposal.

%
\begin{center}  
\begin{longtable}{c c}
\scalebox{.7}{ 
\begin{mytikz}
\graphsingle (g3) {
\graphname{rule $rewardVoter$:}
    \begin{mytikz}  
	\double (cur1) at (2,-1.5) {t:$Curator$\n $rwds \to rwds + 1$};
	\double (reg) at (2,0) {r:$Registry$};   
    \double (ch) at (-2,-1.5) {ch:$Challenge$};
       \edge (cur1) -- node[uselab]{:$of$}(reg);
       	\double (cand) at (-2,0) {:$Candidate$};
	  \edge (cand) -- node[uselab]{:$on$}(reg);
    \begin{lhsonly}
	\edge (cur1) -- node[uselab,text=blue]{:$voteYay$} (ch);
	\end{lhsonly}
    \begin{naconly}
	 \edge (cand) -- node[uselab,text=red]{:$ch'ed$}(ch);
	 \end{naconly}

 	\end{mytikz}
 	 	};
 	 	\node(n1)[below=.1cm of g3]{$maj(r, ch) == 1$};
\end{mytikz} }
& 
\scalebox{.7}{ 
\begin{mytikz}
\graphsingle (g3) {
\graphname{rule $rewardNonVoter$:}
    \begin{mytikz}  
	\double (cur1) at (2.5,-2.5) {t:$Curator$\n $rwds \to rwds + 1$};
	\double (reg) at (2.5,0) {r:$Registry$};   
    \double (ch) at (-2.5,-2.5) {ch:$Challenge$\n $noRwds \to noRwds + 1$};
       \edge (cur1) -- node[uselab]{:$of$}(reg);
       	\double (cand) at (-2.5,0) {:$Candidate$};
	  \edge (cand) -- node[uselab]{:$on$}(reg);
       \begin{naconly}
        \edge (cand) -- node[uselab,text=red]{:$ch'ed$}(ch);
       \edge (cur1) -- node[uselab,text=red]{:$voteYay$} (ch);
       \path[->] (ch) edge[color=red,bend left=20] node[uselab,text=red]{:$rwd$} (cur1);
       \path[->] (ch) edge[color=red,bend right=20] node[uselab,text=red]{:$ch'er$} (cur1);
       \end{naconly}
       \begin{rhsonly}
       \edge (ch) edge[bend left=40] node[uselab,text=green!70!black]{:$rwd$} (cur1); 
       \end{rhsonly}
 	\end{mytikz}
 	 	};
 	 	\node(n1)[below=.1cm of g3]{$maj(r, ch) == 0$};
\end{mytikz}}\\
\end{longtable}
\begin{longtable}{c c c}
\scalebox{.7}{ 
\begin{mytikz}
\graphsingle (g3) {
\graphname{rule $resolveChallenge$:}
    \begin{mytikz}  
	\double (reg) at (-6,-3) {r:$Registry$};  
	  \double (cur1) at (-3,-1.5) {:$Curator$};
   	 \edge (cur1) -- node[uselab]{:$of$}(reg);
	\begin{lhsonly} 
    \double (ch) at (-3,-3) {ch:$Challenge$};
  	\edge (ch) -- node[uselab,text=blue]{:$ch'er$} (cur1);     
       \end{lhsonly}
 	\end{mytikz}
 	 	};	 	
\end{mytikz} }
&
\scalebox{.7}{ 
\begin{mytikz}
\graphsingle (g3) {
\graphname{rule $delVoteLink$:}
    \begin{mytikz}  
	\double (cur1) at (-3,0) {t:$Curator$};
	\double (reg) at (0,0) {r:$Registry$};   
    \double (ch) at (-2,-1.5) {ch:$Challenge$};
       \edge (cur1) -- node[uselab]{:$of$}(reg);
    \begin{lhsonly}
	\edge (cur1) -- node[uselab,text=blue]{:$voteYay$} (ch);
	\end{lhsonly}
 	\end{mytikz}
 	 	};
 	 	\node(n1)[below=.1cm of g3]{$maj(r, ch) == 0$};
\end{mytikz} }
&
\scalebox{.7}{ 
\begin{mytikz}
\graphsingle (g3) {
\graphname{rule $delRwdLink$:}
    \begin{mytikz}  
	\double (cur1) at (-3,0) {t:$Curator$};
	\double (reg) at (0,0) {r:$Registry$};   
    \double (ch) at (-2,-1.5) {ch:$Challenge$};
       \edge (cur1) -- node[uselab]{:$of$}(reg);
    \begin{lhsonly}
	\edge (ch) -- node[uselab,text=blue]{:$rwd$} (cur1);
	\end{lhsonly}
 	\end{mytikz}
 	 	};
 	 	\node(n1)[below=.1cm of g3]{$maj(r, ch) == 0$};
 	 	\node(n2)[below=.1cm of n1]{$ch.noRwds == r.noCurs - ch.noVotes$};
\end{mytikz} }\\
%
%
%
\end{longtable}
\end{center}  
There are two scenarios after  voting has finished. 
If the challenge succeeds, i.e. $maj(ch, r) == 1$
\begin{itemize}
\item Every voter is rewarded and the corresponding vote link is deleted. 
\item Once all vote links are deleted, the challenge can be resolved by rule $resolveChallenge$. Due to the dangling condition, this rule is only enabled once all vote links are deleted.
\end{itemize}
If the challenge fails, i.e. $maj(ch, r) == 0$
\begin{itemize}
\item Every non-voter who isn't the challenger is rewarded and a reward link is pointed to them from the challenge to make sure none is rewarded more than once. Here, the vote links are intact through the rewarding process. 
\item To safely resolve the challenge, we delete all reward and vote edges by applying $delRwdLink$ and $delVoteLink$ repeatedly. The order does not matter. $delRwdLink$ is only enabled once all non-voters have been rewarded (i.e. $ch.noRwds == r.noCurs - ch.noVotes$). This condition does not change when $delRwdLink$ is applied so the rule remains enabled until all reward edges are deleted.
\item And finally apply $resolveChallenge$ as before.\\
\end{itemize}

\subsection{Encoding of NACs}

Next, we discuss how the NACs of a conditional attributed grammar like the TCR model can be encoded. This encoding extends the analogous construction for safe unattributed grammars. First, let us look at the simpler case of complementation in the case of incremental NACs that have unique occurrences. 
%
A NAC $n: L \to N$ has \emph{unique occurrences} if for every reachable graph $G$ and match $m: L \to G$ there is at most one occurrence $q: N \to G$ of $n$ such that $m \circ n = q$. This can be seen as a form of local safety, permitting an encoding using complement edges similar to the safe case.
That means, for all NACs $n$ with unique occurrences we add complement types to the type graph and derive constraints and complemented start graph as described in Definitions~\ref{def:complementedTG},~\ref{def:invariant} and~\ref{def:invariantClosure}. Similarly, we replace each rule $p$ with NAC $n$ by its derived rule obtained by applying $p$ to the complemented version of $n$ as in Construction~\ref{def:compProds}. Then, if a rule $p$ deletes/creates an occurrence of $n$, we extend $p$ to create/delete a parallel occurrence of $n$'s complement as described in Construction~\ref{def:rules-inv} 1/2. 

The encoding starts to diverge from the safe case where deletion and creation of boundary nodes is concerned. If $n$ is of shape IN or OUT and $p$ creates/deletes a boundary node in $v \in L_n$, we extend $p$ to create/delete the complement structure $\overline {N_- \setminus n_-(L_n)}$ with it. If $n$ is of shape $E$ and $p$ creates/deletes a boundary node in $v \in L_n$, we extend $p$ by an occurrence of the complement of the negative edge $e \in N_-$ attached to $v$, and add the other boundary node $u \in L_n$ as a multiobject such that edge $e$ between $v$ and $u$ is created/deleted along with $v$.

To address the full TCR example, we also have to consider cases where the NACs do not have unique occurrences only. In this case, we introduce reference counters to keep track of how many occurrences there are. That means, we compute the body of the NAC as before, but then, if $n$ is of shape $E$:
\begin{itemize}
\item[(1)] We add to $TG$ a complement type $t_{N^-}(e)^{n^-}$ for the type $t_{N^-}(e)$ of the edge $e \in E_{N^-}$ and introduce an attribute $\#n^-: nat$ to the new edge type. We add a constraint $\#n^- = \mbox{card}(n^-)$ as an invariant, where $\mbox{card}(n^-)$ is the number of occurrences of $n^-$ parallel to $\bar n^-$.\footnote{Formally, such a constraint is satisfied in a graph if for all occurrences $o: L^- \to G$ of the shared boundary of $n^-$ and $\bar n^-$, the number of compatible occurrences of $n^-$ equals $\#n^-$.} 
\item[(2)] If there exists a rule that creates or deletes a boundary node in $v \in L^-$, we add an attribute $\#\bar n^-$ to the type of $v$, subject to the invariant $\#\bar n^- = \mbox{card} \{\bar e \mid  \mbox{$e$ is the edge attached to $v$ in $N^-$}\}$ (this is counting the complement edges attached to $v$).
\end{itemize}
If $n$ is of shape IN or OUT:
\begin{itemize}
\item[(3)] We introduce an attribute $\#n^-: nat$ to the type of the single boundary node $v$ in $L^-$. 
\item[(4)] We add constraint $\#n^- = \mbox{card}(n^-)$ as invariant ($\mbox{card}(n^-)$ is the number of occurrences of $n^-$).
\end{itemize}
Then, for each NAC $n$ and rule $p$:
\begin{itemize}
\item[(5)] If $n \in \nacs(p)$ we replace $p$ by the derived rule resulting from $p$'s application to $\bar n$ with $\#n^- := 0$.
\item[(6)] If $p$ creates/deletes $k$ occurrences of the NAC, we extend $p$ to increase/decrease $\#n^-$ by $k$.
\item[(7)] If $n$ is of shape IN or OUT and $p$ creates a boundary node in $v \in L^-$, we extend $p$ by the attribute assignment $\#n^- := 0$.
\item[(8)] If $n$ is of shape $E$ and $p$ creates/deletes a boundary node in $v \in L^-$, extend $p$ by an occurrence of the complement of the negative edge $e \in N^-$ attached to $v$, add the other boundary node $u \in L^n$ as a multiobject such that edge $e$ between $v$ and $u$ is created/deleted along with $v$. Set $\#\bar n^- := 0$ when $v$ is created.  
\end{itemize} 

Finally, extend the start graph to include complements of all missing NAC occurrences, 
add all newly declared attributes and initialise them such that their constraints are satisfied. Together with the adaptations to the rules, this ensures that the constraints are invariants.

Note that we consider rules with multiobjects as rule schemata that unfold into a set of instance rules using amalgamation \cite{amal}, such that for each rule instance with $k$ copies of a multiobject arising from a boundary node of a NAC $n$, the value of the attribute $\# \bar n^-$ on the other boundary node of $n$ is $k$.

\subsection{Encoding the NACs of the TCR}

We apply this construction to the TCR grammar as follows (see the figures below).
As before, the attributes not mentioned in $\overline{G_0}$ are initialised with zero. There are five rules with NACs, namely $apply$, $challenge$, $voteYay$, $rewardVoter$, and $rewardNonVoter$. With the exception of $challenge$ which has a NAC of shape $IN$, all NACs are of shape $E$ for which we add an attributed complement edge in $\overline{TG}$ (step (1) in the encoding). Since the boundary node $Challenge$ is created by $\overline{challenge}$ and deleted by $\overline{resolveChallenge}$, following step (2), we add attributes to $Challenge$ with invariants requiring that each attribute always reflects the number of corresponding edges (e.g. $ch.noVoteBar == card\{voteL:\overline{vote} ~~|~~ tar(voteL) == ch\}$). For the NAC in $challenge$, we follow step (3) and add attribute $noCh$ to the boundary node $Candidate$ that counts the number of arcs of type $ch'ed$ that target it. Based on  (4), we assert the invariant $c.noCh == card\{chL:ch'ed ~~|~~ tar(chL) == c\}$ for every Candidate $c$.    Conditions on attributes are below each rule.

\medskip
\noindent
\begin{minipage}{0.5\textwidth}
\begin{center}
\scalebox{.7}{ 
\begin{mytikz}
\graphsingle (g1) {
\graphname{type graph $\overline{TG}$:}
    \begin{mytikz}  
\usedouble (reg) at (0,0) {$Registry$\n noCurs:Int};
\usedouble (cand) at (-6.5,0) {$Candidate$\n rwds:Int\\noCh:Nat};
 \usedouble (cur) at (0,-3.5) {$Curator$\n rwds:Int};
 \usedouble (ch) at (-6.5,-3.5) {$Challenge$\n noVotes:Int\\noRwds:Int\\ noRwdBar:Nat\\noVoteBar:Nat\\noCh'erBar:Nat};
 \useedge (cand) edge[bend left=10]  node[uselab]{$on$}(reg);
  \useedge (cand) edge[bend left=-10]  node[uselab]{$\overline{on}~[n:Nat]$}(reg);
  \useedge (cur) edge[bend left=-0]  node[uselab]{$of$}(reg);
 \useedge (cur) edge[bend left=-0]  node[uselab]{$voteYay$} (ch);
 \useedge (cur) edge[bend left=15]  node[uselab]{$\overline{voteYay}~[n:Nat]$} (ch);
  \useedge (ch) edge[bend left=-0]  node[uselab]{$ch'ed$} (cand);
  \useedge (ch) edge[bend left=-70]  node[uselab]{$\overline{ch'ed}~[n:Nat]$} (cand);
    \useedge (ch) edge[bend left=-30]  node[uselab]{$ch'er$} (cur);
    \useedge (ch) edge[bend left=15] node[uselab]{$rwd$} (cur);
    \useedge (ch) edge[bend left=30] node[uselab]{$\overline{rwd}~[n:Nat]$} (cur);
      \useedge (ch) edge[bend left=-45,looseness=1]  node[uselab,below]{$\overline{ch'er}~[n:Nat]$} (cur);
		\end{mytikz}
};
\end{mytikz}
}\\[.5cm]

\scalebox{.65}{
\begin{mytikz}
\graphsingle (g1) {
\graphname{rule $\overline{apply}$:}
    \begin{mytikz}  
	\double (cand) at (0,-2) {c:$Candidate$}; 
	\double (reg) at (0,0) {r:$Registry$};
    \begin{rhsonly}
    \edge (cand) edge[bend right=20] node[uselab, text=green!70!black]{:$on$}(reg);
    \end{rhsonly}
    \edge (cand.west) edge[bend left=35] node[uselab]{:$\overline{on}[n \to n+1]$}(reg.west);
 	\end{mytikz}
};
\node(n1) [below=.3cm of g1] {$c.\overline{on}.n==0$};
\end{mytikz}}
 \quad
\scalebox{.65}{
\begin{mytikz}
\graphsingle (g1) {
\graphname{rule $\overline{voteYay}$:}
    \begin{mytikz}  
	\single (cur) at (-1.5,1) {t:$Curator$};
	\double (reg) at (-5,1) {r:$Registry$};   
    \double (ch) at (-1,-1) {ch:$Challenge$\n $noVotes \to noVotes + 1$};
	\single (cand) at (-5,-1) {:$Candidate$};
	\edge (cur) -- node[uselab]{:$of$}(reg);
    \edge (ch) -- node[uselab]{$ch'ed$} (cand);
	\edge (cand) -- node[uselab]{$on$}(reg);
    \begin{rhsonly}
	\edge (cur) edge[bend left=-80] node[uselab,text=green!70!black, left]{:$voteYay$} (ch);
    \end{rhsonly}
	\edge (cur) edge[bend left=50] node[uselab,text=black]{voteL:$\overline{voteYay}~[n \to n+1]$} (ch);
  	\end{mytikz}
};
\node(n1) [below=.3cm of g1] {$voteL.n == 0$};
\end{mytikz}}
\end{center}
\end{minipage}
\begin{minipage}{0.5\textwidth}
\begin{center}
\scalebox{.7}{
\begin{mytikz}
\graphsingle (sg) {
\graphname{start graph $\overline{G_{0}}$:}
    \begin{mytikz} 	
	\double (reg) at (-1,-2) {r:$Registry$\n $noCurs = 4$};  
	
	\double (cand) at (-1,0) {c:$Candidate$\n $noCh = 0$};  
	\single (cur1) at (4,0) {t1:$Curator$};  
	\single (cur2) at (4,-1) {t2:$Curator$};  
	\single (cur3) at (4,-2) {t3:$Curator$};  
	\single (cur4) at (4,-3) {t4:$Curator$};  
	\edge (cur1) -- node[uselab]{:$of$}(reg);
	\edge (cur2) -- node[uselab]{:$of$}(reg);
	\edge (cur3) -- node[uselab]{:$of$}(reg);
	\edge (cur4) -- node[uselab]{:$of$}(reg);
	\edge (cand) edge[bend left=0]  node[uselab]{$\overline{on}~[n=0]$}(reg);
 	\end{mytikz}
};
\end{mytikz}
}\\[.5cm]

\scalebox{.65}{
\begin{mytikz}
\graphsingle (g1) {
\graphname{rule $\overline{challenge}$:}
    \begin{mytikz}  
	\single (cur) at (3,1) {t:$Curator$};
	\double (reg) at (0,1.5) {r:$Registry$};  
	\double (cand) at (0,-2) {c:$Candidate$\n $noCh \to noCh + 1$};
	  \single (cur2n) at (5.4,1.4) {:$Curator$};
	\single (cur2) at (5.5,1.5) [fill=white]{:$Curator$};

	\edge (cand) -- node[uselab]{:$on$}(reg);
	\edge (cur) -- node[uselab]{:$of$}(reg);
	\edge (cur2) edge[bend left=-0] node[uselab]{:$of$}(reg);

    \begin{rhsonly}
    \double (ch) at (5.5,-2) {ch:$Challenge$\n $noVoters = 0$\\ $noRwds = 0$\\$noRwdBar = r.noCurs - 1$\\$noVoteBar = r.noCurs - 1$\\$noCh'erBar = r.noCurs - 1$};
    \edge (ch) -- node[uselab,text=green!70!black]{:$ch'ed$} (cand);
    \edge (ch) edge[bend left=20] node[uselab,text=green!70!black,left]{:$\overline{ch'ed}~[n = 1]$} (cand);
    \edge (ch) -- node[uselab,text=green!70!black]{:$ch'er$} (cur);
    \edge (ch) edge[bend left=20] node[uselab,text=green!70!black,left]{:$\overline{ch'er}~[n = 1]$} (cur);

	\edge (cur2) edge[bend left=0] node[uselab,text=green!70!black]{:$\overline{voteYay}~[n = 0]$} (ch);
	\edge (ch) edge[bend left=-50] node[uselab,text=green!70!black,above]{:$\overline{rwd}~[n = 0]$} (cur2);
 	\edge (ch) edge[bend left=-80] node[uselab,text=green!70!black,right]{:$\overline{ch'er}~[n = 0]$} (cur2);
    \end{rhsonly}
 	\end{mytikz}
};
\node(n1) [below=.1cm of g1] {$c.noCh == 0$};
\end{mytikz}
}
\end{center}
\end{minipage}

\medskip
In $\overline{apply}$, if there is no edge of type $on$ present (i.e. attribute $n$ of $\overline{on}$ is zero), such an edge is created and the attribute of $\overline{on}$ is incremented by one to reflect this change (cf. (1) and (2) of the encoding). In rule $challenge$, if the candidate $c$ is not currently challenged (i.e. $c.noCh == 0$), a challenge is created and $c.noCh$ is incremented by one. In addition to the creation of the $ch'ed$ edge, its complement $\overline{ch'ed}$ is created with counter initialised to $n=1$. Since Challenge is a boundary node for complement edges corresponding to NACs in $voteYay$ and $rewardNonVoter$, we have a multiobject for curators in rule $challenge$ (creates a Challenge node) and $resolveChallenge$ (deletes a Challenge node) (step (8)). 
%
Rules that contain multiobjects can be interepreted as interaction schemes \cite{amal} which expand into countably infinite set of rules with reference count attributes ensuring that the rule with the correct number of instances of the multiobjects is applied in each case.

In the rewarding phase, the rules are the following:\\  

\begin{center}
\scalebox{.7}{ 
\begin{tabular}{ll}
\begin{mytikz}
\graphsingle (g1) {
\graphname{rule $\overline{rewardCh'er}$:}
    \begin{mytikz}  
	\double (cur1) at (0,0) {:$Curator$\n $rwds \to rwds + 1$};
	\double (reg) at (6,0) {r:$Registry$};
	\double (cand) at (6,-2.5) {:$Candidate$\n $noCh \to noCh - 1$};
    \double (ch) at (0,-2.5) {ch:$Challenge$};
    \edge (cur1) -- node[uselab]{:$of$}(reg);
   	\edge (ch) -- node[uselab]{:$ch'er$} (cur1);
	     \edge (ch) edge[bend left=20] node[uselab]{:$\overline{ch'ed}~[n \rightarrow n- 1]$} (cand);
    \begin{lhsonly}
    \edge (ch) -- node[uselab,text=blue]{:$ch'ed$} (cand);
   	\edge (cand) edge[bend right=20] node[uselab,text=blue]{:$on$}(reg);
	\end{lhsonly}

     \edge (cand) edge[bend right=-20] node[uselab,text=black,left] {$\overline{on}~[n \to n - 1]$} (reg);
 	\end{mytikz} 	
};
\node(n1)[below=.1cm of g1]{$maj(r, ch) == 1$};
\end{mytikz}
&
\begin{mytikz}
\graphsingle (g1) {
\graphname{rule $\overline{rewardCh'ed}$:}
    \begin{mytikz}  
	\double (cur1) at (0,0) {:$Curator$};
	\double (reg) at (6,0) {r:$Registry$};
	\double (cand) at (6,-2.5) {:$Candidate$\n $noCh \to noCh - 1$\\$rwds \to rwds + 1$};
    \double (ch) at (0,-2.5) {ch:$Challenge$};
    \edge (cur1) -- node[uselab]{:$of$}(reg);
   	\edge (cand) -- node[uselab]{:$on$}(reg);
   	\edge (ch) -- node[uselab]{:$ch'er$} (cur1);
	     \edge (ch) edge[bend left=20] node[uselab]{:$\overline{ch'ed}~[n \rightarrow n- 1]$} (cand);
    \begin{lhsonly}
    \edge (ch) -- node[uselab,text=blue]{:$ch'ed$} (cand);
	\end{lhsonly}
 	\end{mytikz}
};
\node(n1)[below=.1cm of g1]{$maj(r, ch) == 0$};
\end{mytikz}\\
\end{tabular}
}
\end{center} 

In $\overline{rewardCh'ed}$, we decrement by one the Candidate's reference counter for challenges. In $\overline{rewardCh'er}$, in addition to decrementing $noCh$, we need to decrement the attribute of $\overline{on}$ to reflect deletion of $on$ parallel to it (point 2). Both rules decrement the $\overline{ch'ed}$ counter to reflect the deletion of a $ch'ed$ edge.

\begin{center}
\begin{tabular}{l l}
\scalebox{.75}{ 
\begin{mytikz}
\graphsingle (g3) {
\graphname{rule $\overline{rewardVoter}$:}
    \begin{mytikz}  
	\double (cur1) at (2,-1.5) {t:$Curator$\n $rwds \to rwds + 1$};
	\double (reg) at (2,0) {r:$Registry$};   
    \double (ch) at (-4,-1.5) {ch:$Challenge$};
       \edge (cur1) -- node[uselab]{:$of$}(reg);
         \double (cand) at (-4,0) {:$Candidate$};
	  \edge (cand) -- node[uselab]{:$on$}(reg);
	  	     \edge (ch) edge[bend left=20] node[uselab]{chd:$\overline{ch'ed}$} (cand);
    \begin{lhsonly}
	\edge (cur1) -- node[uselab,text=blue]{:$voteYay$} (ch);
	\end{lhsonly}
	\edge (cur1) edge[bend right=20] node[uselab,text=black,above]{:$\overline{voteYay}~[n \to n - 1]$} (ch);
 	\end{mytikz}
 	 	};
 	 	\node(n1)[below=.1cm of g3]{$maj(r, ch) == 1$};
		\node(n2) [below=.1cm of n1] {$chd.n == 0$};
\end{mytikz} }
&
\scalebox{.65}{ 
\begin{mytikz}
\graphsingle (g3) {
\graphname{rule $\overline{rewardNonVoter}$:}
    \begin{mytikz}  
	\double (cur1) at (4,-2.5) {t:$Curator$\n $rwds \to rwds + 1$};
	\double (reg) at (4,0) {r:$Registry$};   
    \double (ch) at (-2,-2.5) {ch:$Challenge$\n $noRwds \to noRwds + 1$};
       \edge (cur1) -- node[uselab]{:$of$}(reg);
	\edge (ch) edge[bend left=30] node[uselab,text=black]{rwdL:$\overline{rwd}~[n \to n+1]$} (cur1);
	\edge (cur1) edge[bend left=25] node[uselab,text=black]{voteL:$\overline{voteYay}$} (ch);
	\edge (ch) edge[bend left=15] node[uselab,text=black]{chL:$\overline{ch'er}$} (cur1);
	\double (cand) at (-2.5,0) {:$Candidate$};
	  \edge (cand) -- node[uselab]{:$on$}(reg);
	  \edge (ch) edge[bend left=20] node[uselab]{chd:$\overline{ch'ed}$} (cand);
    \begin{rhsonly}
	\edge (ch) edge[bend left=0] node[uselab,text=green!70!black]{:${rwd}$} (cur1);    
    \end{rhsonly}
 	\end{mytikz}
 	 	};
 	 	\node(n1)[below=.1cm of g3]{$maj(r, ch) == 0$};
 	 	\node(n2) [below=.1cm of n1] {$voteL.n == 0$};
 	 	\node(n3) [below=.1cm of n2] {$rwdL.n == 0$};
 	 	\node(n4) [below=.1cm of n3] {$chL.n == 0$};
		\node(n5) [below=.1cm of n4] {$chd.n == 0$};
\end{mytikz} }\\
\scalebox{.75}{ 
\begin{mytikz}
\graphsingle (g3) {
\graphname{rule $\overline{delVoteLink}$:}
    \begin{mytikz}  
	\double (cur1) at (-3,0) {t:$Curator$};
	\double (reg) at (0,0) {r:$Registry$};   
    \double (ch) at (-3,-1.5) {ch:$Challenge$};
       \edge (cur1) -- node[uselab]{:$of$}(reg);
    \begin{lhsonly}
	\edge (cur1) -- node[uselab,text=blue]{:$voteYay$} (ch);
	\end{lhsonly}
    \edge[-] (cur1.west) edge[->,bend right=50] node[uselab,text=black,left]{:$\overline{voteYay}~[n \to n - 1]$} (ch);
 	\end{mytikz}
 	 	};
 	 	\node(n1)[below=.1cm of g3]{$maj(r, ch) == 0$};
\end{mytikz} }
&
\scalebox{.7}{ 
\begin{mytikz}
\graphsingle (g3) {
\graphname{rule $\overline{delRwdLink}$:}
    \begin{mytikz}  
	\double (cur1) at (-3,0) {t:$Curator$};
	\double (reg) at (0,0) {r:$Registry$};   
    \double (ch) at (-2,-2) {ch:$Challenge$};
       \edge (cur1) -- node[uselab]{:$of$}(reg);
    \begin{lhsonly}
	\edge (ch) -- node[uselab,text=blue]{:$rwd$} (cur1);
	\end{lhsonly}
	\edge[-] (ch.west) edge[->,bend left=50] node[uselab,text=black,left]{:$\overline{rwd}~[n \to n - 1]$} (cur1);
 	\end{mytikz}
 	 	};
 	 	\node(n1)[below=.1cm of g3]{$maj(r, ch) == 0$};
 	 	\node(n2)[below=.1cm of n1]{$ch.noRwds == r.noCurs - ch.noVotes$};
\end{mytikz} }\\
\multicolumn{2}{l}{
\scalebox{.75}{ 
\begin{mytikz}
\graphsingle (g3) {
\graphname{rule $\overline{resolveChallenge}$:}
    \begin{mytikz}  
	\double (reg) at (-3,-3) {r:$Registry$};  
	  \double (cur1) at (-3,-1.5) {:$Curator$};
	    \double (cur2n) at (-3.1,-4.6) {:$Curator$};
	  \double (cur2) at (-3,-4.5)[fill=white] {:$Curator$};  
   	 \edge (cur1) -- node[uselab]{:$of$}(reg);
   	  \edge (cur2) -- node[uselab]{:$of$}(reg);
	          \double (cand) at (3,-1.5) {:$Candidate$};
	\begin{lhsonly} 
    \double (ch) at (3,-3) {ch:$Challenge$};
  	\edge (ch) -- node[uselab,text=blue]{:$ch'er$} (cur1);   
  	\edge (ch) edge[bend right=20] node[uselab,text=blue]{chrL:$\overline{ch'er}$} (cur1);   
 	\edge (ch) edge[bend left=-20] node[uselab,text=blue,left]{rwdL:$\overline{rwd}$} (cur2);
  	\edge (ch) edge[bend left=20,looseness=1] node[uselab,text=blue,right]{nchrL:$\overline{ch'er}$} (cur2);
 	\edge (cur2) edge[bend left=-0] node[uselab,text=blue,right]{voteL:$\overline{voteYay}$} (ch);
	 \edge (ch) edge[bend left=-20] node[uselab,text=blue]{chd:$\overline{ch'ed}$} (cand);
       \end{lhsonly}
 	\end{mytikz}
 	 	};	 	
 	 	\node (n1) at (7,1.5) {$rwdL.n == 0$};
 	 	\node (n2) [below=.1cm of n1] {$chrL.n == 1$};
 	 	\node (n3) [below=.1cm of n2] {$nchrL.n == 0$};
 	 	\node (n4) [below=.1cm of n3] {$voteL.n == 0$};
	 	\node (n5) [below=.1cm of n4] {$chd.n == 0$};
 	 	\node (n6) [below=.1cm of n5] {$ch.noCh'erBar == r.noCurs - 1$};
 	 	\node (n7) [below=.1cm of n6] {$ch.noRwdBar == r.noCurs - 1$};
 	 	\node (n8) [below=.1cm of n7] {$ch.noVoteBar == r.noCurs - 1$}; 	
\end{mytikz} }}
\\
\end{tabular}
\end{center}

In $rewardNonVoter$, conditions $voteL.n == 0$, $rwdL.n == 0$, $chL.n == 0$, and $chd.n == 0$ ensure that the curator has not voted, has not been rewarded,  is not the challenger and that the candidate is no longer linked to the challenge. We increment the attribute of $rwdL$ by one to reflect the creation of an edge of type $rwd$. The encoding of $\overline{resolveChallenge}$ gives rise to a multiobject because $Challenge$ is a boundary node for a constraint of shape $E$ in $voteYay$ and $rewardNonVoter$ rules (point 4). The challenger has an edge of type $ch'er$ and a complement edge of type $\overline{ch'er}$ whose attribute is $1$ whereas every non-challenger curator has three complement edges of types $\overline{ch'er}$, $\overline{voteYay}$, and $\overline{rwd}$ each with an attribute value of $0$. The candidate has a $\overline{ch'ed}$ complement edge with counter attribute $0$.


\section{Conclusion and Future Work}
\label{sec:Conclusion}

As a contribution towards a comprehensive unfolding semantics applicable also to  graph grammars enriched with both attributes and NAC's, in this paper we addressed the problem of encoding the NACs  into the graphical structure of the states. We presented this construction formally for the restricted case of unattributed safe grammars and incremental NACs, for which we proved that the construction generates an unconditional grammar having equivalent derivations of the original one and, we conjecture, manifesting the same sequential independence among transformations. This construction is reminiscent of the complementation of Elementary Net Systems~\cite{DBLP:conf/ac/Rozenberg86}, a construction known to preserve the branching behavior, thus supporting our confidence  that the unfolding semantics of the encoded grammar will correspond precisely to that of the original one. A formal comparison with complementation of nets is left as future work, together with the proof of preservation and reflection of independence. 

The key ideas of how to  generalize the construction to attributed, possibly unsafe grammars was presented informally. Unsafety requires to count the number of items that could cause the violation of a NAC, and thus it requires the encoded grammar to be attributed. More interestingly, certain shapes of NACs require to encode a rule of the original grammar with a family of rules, that can be generated via an amalgamation. As future work we plan to provide a formalization of the construction in this more general case, and to study to what extent the concurrent semantics is preserved through the construction. 

We would also like to generalize our construction to the case of non-incremental NACs, motivated by the observation that in the attributed case incremental NACs have limited expressiveness. But previous work~\cite{incrementalnacs} showed that the notion of independence among transitions is not well defined in case of general NACs, thus weakening our overall motivations in this case. 


%
%
\bibliographystyle{eptcs}
\bibliography{references}
%
%

\end{document}